\newcommand{\eea}{\end{eqnarray}}
\newcommand{\bea}{\begin{eqnarray}}
\newcommand{\bes}{\begin{subequations}}
\newcommand{\ees}{\end{subequations}}
\newcommand{\beq}{\begin{equation}}
\newcommand{\eeq}{\end{equation}}
\newcommand{\DL}[1]{\textcolor{red}{[DL: #1]}}
\newcommand{\ignore}[1]{}
\theoremstyle{plain}
\newtheorem{thm}{\protect\theoremname}
\theoremstyle{plain}
\newtheorem{prop}[thm]{\protect\propositionname}
\theoremstyle{remark}
\newcommand{\hc}{\mathrm{h.c.}}
\DeclareMathOperator{\Tr}{Tr}
\newcommand{\1}{\leavevmode{\rm 1\ifmmode\mkern  -4.8mu\else\kern -.3em\fi I}}
\providecommand{\propositionname}{Proposition}
\providecommand{\remarkname}{Remark}
\providecommand{\theoremname}{Theorem}
\begin{document}

\title{Error Reduction in Quantum Annealing using Boundary Cancellation: Only the End Matters}

\author{Lorenzo Campos Venuti}
\affiliation{Department of Physics and Astronomy, University of Southern California, Los
  Angeles, California 90089, USA}
\affiliation{Center for Quantum Information Science \& Technology, University of Southern
  California, Los Angeles, California 90089, USA}

\author{Daniel A. Lidar}
\affiliation{Department of Physics and Astronomy, University of Southern California, Los
  Angeles, California 90089, USA}
\affiliation{Center for Quantum Information Science \& Technology, University of Southern
  California, Los Angeles, California 90089, USA}
\affiliation{Department of Electrical Engineering, University of Southern California,
Los Angeles, CA 90089, USA}
\affiliation{Department of Chemistry, University of Southern California, Los Angeles,
CA 90089, USA}

\date{\today}
\begin{abstract}
The adiabatic theorem of quantum mechanics states that the error between
an instantaneous eigenstate of a time-dependent Hamiltonian and the
state given by quantum evolution of duration $\tau$ is upper bounded
by $C/\tau$ for some positive constant $C$. It has been known for
decades that this error can be reduced to $C_{k}/\tau^{k+1}$ if the
Hamiltonian has vanishing derivatives up to order $k$ at the beginning
and end of the evolution. Here we extend this result to open systems
described by a time-dependent Liouvillian superoperator. We find that
the same results holds provided the Liouvillian has vanishing derivatives
up to order $k$ only at the end of the evolution. This asymmetry
is ascribable to the arrow of time inherent in open system evolution.
We further investigate whether it is possible to satisfy the required
assumptions by controlling only the system, as required for realistic
implementations. Surprisingly, we find the answer to be affirmative.
We establish this rigorously in the setting of the Davies-Lindblad
adiabatic master equation, and numerically in the setting of two different
time-dependent Redfield-type master equations we derive. The results
are shown to be stable with respect to imperfections in the preparation.
Finally, we prove that the results hold also in a fully Hamiltonian
model. 
\end{abstract}
\maketitle

\section{Introduction}
\label{sec:intro}

Quantum annealing and adiabatic quantum computation
are promising candidates in the search for quantum-enhanced information processing~\cite{RevModPhys.80.1061,Albash-Lidar:RMP}. Both can be viewed as adiabatic state preparation protocols~\cite{AharonovTa-Shma:03}, where the target state is typically the solution to a computational problem such as optimization, or a state from a distribution that one wishes to sample from.
In the simplest scenario one simply interpolates linearly from a 
Hamiltonian with an easily prepared ground state to a target Hamiltonian whose ground state encodes the
solution to a given computational problem~\cite{kadowaki_quantum_1998,farhi_quantum_2000}. In this closed-system, ideal, zero-temperature
case the time needed to reach an accuracy $\epsilon$ scales inversely
proportional to $\epsilon$ and to some power of the inverse minimum
gap \cite{jansen_bounds_2007}. A similar result holds also for preparing
thermal equilibrium (Gibbs) states in a realistic, open-system setting, in that the time
needed to reach an accuracy $\epsilon$ from the Gibbs state is still
inversely proportional to $\epsilon$~\cite{venuti_adiabaticity_2016}. The dependence on the problem
size is, however, more complicated in the open system setting (where the dynamics is generated by a Liouvillian rather than a Hamiltonian), primarily because
in this case not only the Hamiltonian gap but also the Liouvillian
gap plays a role \cite{venuti_adiabaticity_2016,venuti_relaxation_2017}. Moreover, in the
open-system setting one does not expect a speedup with respect to classical
preparation algorithms if the temperature is sufficiently high. Arguments
to substantiate this statement on general grounds were given in \cite{venuti_relaxation_2017},
and it can be rigorously proven in certain specific cases \cite{temme_how_2015}. 

Given this state of affairs and the intense efforts to realize adiabatic quantum state preparation experimentally with an eye towards quantum speedups~\cite{speedup}, it is critical to find general protocols for preparing quantum states using the adiabatic approach
that offer provable advantages over naive protocols such as linear interpolation. It is well known that protocols that slow down near the quantum critical point are beneficial~\cite{PhysRevLett.103.080502} and sometimes even necessary for achieving a quantum speedup~\cite{Roland:2002ul}.
In the closed system setting there exists another general method that
allows for a reduction of the adiabatic error (the distance between
the desired state and the state that has been actually prepared) from
$\propto1/\tau$ where $\tau$ is the total evolution time (annealing,
or preparation time) to $\propto1/\tau^{k+1}$ where the exponent
$q\ge1$ can be made arbitrarily large \cite{garrido_degree_1962}.
In fact, it is even possible to achieve an adiabatic error exponentially
small in $\tau$ which in turns means an annealing time logarithmic
in $1/\epsilon$ \cite{nenciu_linear_1993,hagedorn_elementary_2002,lidar_adiabatic_2009,Cheung:2011aa,Wiebe:12,Ge:2015wo}.
This method simply requires the time-dependent Hamiltonian to have
vanishing derivatives up to order $k$ at the initial and final time.
In other words, the schedule should be sufficiently flat at the beginning
and at the end of the anneal. 

Here we generalize the boundary cancellation method to the open system
case, where the dynamics are generated by a time-dependent Liouvillian
$\mathcal{L}_{\tau}(t)$. We find that, in contrast to the familiar
closed-system result, an asymmetry with respect to the Hamiltonian
case appears, in that the same result holds provided the \emph{Liouvillian}
has vanishing derivatives only at the \emph{end} of the evolution.
The origin of this asymmetry traces back to the time-asymmetry of
non-unitary evolution, which admits an arrow of time. We then consider
whether it is possible to satisfy the required condition on the Liouvillian
by controlling \emph{only the system} (and not the bath degrees of
freedom), as would be required, e.g., for applications using experimental
quantum annealers. To this end we first consider the time-dependent
Davies-Lindblad type adiabatic master equation~\cite{albash_quantum_2012,Albash:2015nx}
and, encouragingly, find the answer to be positive, in that it suffices
to enforce that the system Hamiltonian alone has vanishing derivatives
in order for the adiabatic error to be upper bounded by $C_{k}/\tau^{k+1}$.
This result requires complete positivity, a condition that is automatically
satisfied in this case. To check the robustness of this result to
different levels of approximations we also derive time-dependent adiabatic
as well as non-adiabatic Redfield-type master equations. These master
equations have not appeared previously, to the best of our knowledge,
and should be of independent interest. 
In the  Redfield case the lack of a complete positivity guarantee
prevents us from satisfying the assumptions required for our previous
result to hold. However, our numerical simulations confirm that, in
a parameter regime for which positivity  is
satisfied, enforcing vanishing derivatives of  the system Hamiltonian alone results in a greatly diminished adiabatic error.

The important question that remains to be answered concerns the scaling
of the adiabatic error with other parameters such as the number of
qubits $N$ or the temperature. The scaling with $N$ is not yet fully
understood in the closed system case, and the situation for open systems
with or without boundary cancellation is even more complicated. 
We show that within the range of parameters tested in our numerical simulations, boundary cancellation 
provides an advantage for all annealing times,
with an advantage that is more pronounced in the large $\tau$ region.

In Section~\ref{sec:BC} we formulate boundary cancellation in terms
of condition on the derivatives of the Liouvillian at the end of the
evolution. To do so we first give the general setting for the theory
in terms of trace and hermitian preserving superoperators and describe
a useful adiabatic expansion in powers of the evolution time $\tau$.
We also provide a stability and time-scale analysis. In Section~\ref{sec:application}
we apply the general theory in the setting of various master equations
derived from first-principles, and show both analytically and numerically
that---remarkably---boundary cancellation works by controlling only
the system Hamiltonian. We conclude in Section~\ref{sec:conc}, and
provide additional technical details in the Appendix.

\section{Boundary Cancellation in Open Systems}
\label{sec:BC}

\subsection{Setup}
For simplicity we consider a system with a finite-dimensional dimensional
Hilbert space $\mathcal{H\simeq\mathbb{C}}^{n}$ and let $L(\mathcal{H})$
be the algebra of linear operators over it. We fix the norm on $L(\mathcal{H})$ to be the trace norm: $\left\Vert X\right\Vert _{1}:=\Tr \sqrt{X^\dagger X}$ for $X\in L(\mathcal{H})$.  Let a time-dependent Liouvillean
superoperator $\mathcal{L}_{\tau}(t)$ acting on $L(\mathcal{H})$
be given. The evolution of the system (characterized by the quantum
state $\rho_{\tau}(t)$ at time $t$) is described by a time-dependent
linear differential equation 
\begin{equation}
\frac{\partial\rho_{\tau}(t)}{\partial t}=\mathcal{L}_{\tau}(t)\rho_{\tau}(t).
\label{eq:ME}
\end{equation}
In some cases $\mathcal{L}_{\tau}(t)$ depends on $t$ only through
the rescaled time variable $s=t/\tau$ and we define $\mathcal{L}(s):=\mathcal{L}_{\tau}(t)$.
The time-scale $\tau$ is the total evolution (``anneal'') time.
Note that $\mathcal{L}(s)$ depends on $\tau$ if $\mathcal{L}_{\tau}$
is not simply a function of $t/\tau$. Switching to the variable $s$
and defining $\rho(s):=\rho_{\tau}(t)$ allows us to rewrite Eq.~\eqref{eq:ME} in the form
\begin{equation}
\dot{\rho}(s)=\tau\mathcal{L}(s)\rho(s),
\label{eq:ME_rescaled}
\end{equation}
where henceforth the dot denotes differentiation with respect to $s$.
Evolution up to time $\tau$ thus becomes evolution up to $s=1$.
For convenience we also define $\zeta=1/\tau$. 
Below we use both the time $t$ and rescaled time $s$, whichever
is more convenient. We use $\mathcal{L}(s)$ to denote a linear, trace
preserving and hermitian preserving (TPHP) superoperator for all $s\ge0$.
Occasionally we will assume more, namely that $\mathcal{L}(s)$ generates
a contraction semigroup with respect to the induced norm $\left\Vert \mathcal{T}\right\Vert _{1,1}:=\sup_{x\neq0}\left\Vert \mathcal{T}(x)\right\Vert _{1}/\left\Vert x\right\Vert _{1}$,
meaning that $\left\Vert e^{t\mathcal{L}(s)}\right\Vert _{1,1}\le1$
for all $s,t\ge0$. This includes generators that are
in Lindblad form for all $s\ge0$. 

The propagator or evolution operator is the solution of the following differential equation:
\beq
\partial_{s}\mathcal{E}(s,s')=\tau\mathcal{L}(s)\mathcal{E}(s,s'),\quad\mathcal{E}(s,s)=\1.
\eeq
The adiabatic approximation or expansion refers to the solution of Eq.~\eqref{eq:ME_rescaled} when $\tau\to\infty$. When a gap condition
is satisfied the adiabatic expansion is an expansion in powers of
$\tau^{-1}$. By gap condition we mean that the eigenvalue being followed is separated from the rest of the spectrum by a finite gap uniformly for all $s$ in the evolution window $[0,1]$.
In finite dimensions this is the only possibility if one excludes
level crossings. 

Let $P(s)$ be the eigenprojector of $\mathcal{L}(s)$ with eigenvalue $0$. A $0$ eigenvalue always
exist whenever $\mathcal{L}(s)$ is trace-preserving. Moreover if
$\mathcal{L}(s)$ generates a contraction semi-group, the eigenvalue $0$ does not
have a nilpotent term (these and various other useful facts about Eq.~\eqref{eq:ME_rescaled} are collected in \cite{venuti_adiabaticity_2016}). Let us also denote by $Q(s)=\1-P(s)$ the complementary
projection. For simplicity we assume the system to be finite-dimensional
although all the results still hold in the infinite-dimensional case,
possibly after introducing some extra assumptions. 

\subsection{Adiabatic expansion}
We first provide an adiabatic expansion for the case of a non-degenerate
steady state \textendash{} the corresponding generators are generally called \emph{ergodic}.
This is essentially Theorem 6 of Avron \textit{et al}.~\cite{avron_adiabatic_2012} with
some additional simplifying assumptions. 
\begin{prop}
\label{prop:adiabatic_series}
Assume that $\mathcal{L}(s)$ in Eq.~\eqref{eq:ME_rescaled} is $C^{k+2}$ ($k+2$ times differentiable),
TPHP for each fixed $s\ge0$, satisfies the gap condition and has
unique steady state. We denote by $\sigma(s)$ the unique steady
state of $\mathcal{L}(s)$, i.e., $\mathcal{L}(s)\sigma(s)=0$, $\Tr[\sigma(s)]=1$. Let $\rho(s)$ denote the solution of Eq.~\eqref{eq:ME_rescaled} with the initial
condition $\rho(0)=\sigma(0)$. Then
\bes
\begin{align}
\rho(s) & =\sigma(s)+\sum_{n=1}^{k}\zeta^{n}b_{n}(s)+\zeta^{k+1}r_{k}(\zeta,s)\label{eq:adia_series1}\\
b_{1}(s) & =S(s)\dot{P}(s)\sigma(s)=S(s)\dot{\sigma}(s)\label{eq:b1}\\
b_{n+1}(s) & =S(s)\dot{b}_{n}(s),\quad n=1,2,\ldots\label{eq:bn}
\end{align}
\ees
where 
$S(s)$ is the reduced resolvent, i.e., 
\beq
S(s)=\lim_{z\to0}Q(s)\left(\mathcal{L}(s)-z\right)^{-1}Q(s) ,
\eeq
and the remainder is 
\beq
\label{eq:remainder}
r_{k}(\zeta,s)=b_{k+1}(s)-\mathcal{E}(s,0)b_{k+1}(0)-\int_{0}^{s}\mathcal{E}(s,s')\dot{b}_{k+1}(s')ds'.
\eeq
\end{prop}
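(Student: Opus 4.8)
The plan is to build the coefficients $b_n$ from a formal power series, check that the recursion is consistent because $\mathcal{L}(s)$ is trace preserving, and then turn the formal identity into the exact statement by Duhamel's formula together with a single integration by parts. First I would insert $\rho(s)\sim\sigma(s)+\sum_{n\ge1}\zeta^{n}b_{n}(s)$ into Eq.~\eqref{eq:ME_rescaled} written as $\dot\rho=\zeta^{-1}\mathcal{L}\rho$ and collect equal powers of $\zeta$. Using $\mathcal{L}\sigma=0$, the order $\zeta^{0}$ term yields $\mathcal{L}(s)b_{1}(s)=\dot\sigma(s)$ and order $\zeta^{n}$ yields the hierarchy $\mathcal{L}(s)b_{n+1}(s)=\dot b_{n}(s)$. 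Each such equation $\mathcal{L}(s)X=Y$ is solvable only when $Y\in\mathrm{Ran}\,\mathcal{L}(s)=\mathrm{Ran}\,Q(s)$ (the identification of these ranges uses the absence of a nilpotent part at the steady state, guaranteed by the contraction assumption), i.e.\ when $P(s)Y=0$. Here trace preservation is decisive: the zero-eigenprojection is $P(s)=\sigma(s)\Tr[\,\cdot\,]$, so $\mathrm{Ran}\,Q(s)$ is the traceless subspace and $P(s)Y=\sigma(s)\Tr[Y]$. Since $\Tr[\sigma(s)]=1$ is constant, $\Tr[\dot\sigma]=0$ settles the $n=0$ solvability, and since every $b_{n}\in\mathrm{Ran}\,Q$ is traceless, $P(s)\dot b_{n}=\dot\sigma(s)\Tr[b_{n}]=0$ settles all higher orders. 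Fixing the gauge $P(s)b_{n}(s)=0$ and inverting with the reduced resolvent $S(s)$ (which obeys $\mathcal{L}S=Q$ and $PS=SP=0$) then produces exactly $b_{1}=S\dot\sigma=S\dot P\sigma$ and $b_{n+1}=S\dot b_{n}$. The $C^{k+2}$ hypothesis is precisely what lets the recursion run through $b_{k+1}$ and makes $\dot b_{k+1}$ continuous on $[0,1]$.

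Next I would make the expansion rigorous. Truncating at order $k$, set $\rho_{k}(s)=\sigma(s)+\sum_{n=1}^{k}\zeta^{n}b_{n}(s)$; a direct computation using the hierarchy shows that all intermediate terms telescope and leave the residual $\dot\rho_{k}-\zeta^{-1}\mathcal{L}\rho_{k}=\zeta^{k}\dot b_{k}$. Writing $\delta(s)=\rho(s)-\rho_{k}(s)$, this residual together with the initial mismatch $\delta(0)=-\sum_{n=1}^{k}\zeta^{n}b_{n}(0)$ feeds into Duhamel's formula $\delta(s)=\mathcal{E}(s,0)\delta(0)-\zeta^{k}\int_{0}^{s}\mathcal{E}(s,s')\dot b_{k}(s')\,ds'$. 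The key algebraic maneuver is then to lower the order of the integral: using the recursion in the form $\dot b_{k}=\mathcal{L}b_{k+1}$ and the backward propagator identity $\partial_{s'}\mathcal{E}(s,s')=-\zeta^{-1}\mathcal{E}(s,s')\mathcal{L}(s')$, the integrand becomes a total derivative, and one integration by parts converts $\zeta^{k}\int_{0}^{s}\mathcal{E}\,\dot b_{k}\,ds'$ into exactly $-\zeta^{k+1}r_{k}(\zeta,s)$ with $r_{k}$ as displayed in Eq.~\eqref{eq:remainder}. This is the step that promotes the error from $O(\zeta^{k})$ to $O(\zeta^{k+1})$ and yields the claimed form of Eq.~\eqref{eq:adia_series1}.

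The main obstacle, and the point where the gap and contraction hypotheses are indispensable, is the treatment of the residual boundary-layer term $\mathcal{E}(s,0)\sum_{n=1}^{k}\zeta^{n}b_{n}(0)$ left over by the initial mismatch, together with the uniform boundedness of $r_{k}$. Since each $b_{n}(0)\in\mathrm{Ran}\,Q(0)$, I would use that $\mathcal{L}$ generates a contraction semigroup with a uniform gap to bound $\left\Vert \mathcal{E}(s,0)b_{n}(0)\right\Vert_{1}$ by a quantity decaying exponentially in $\tau=\zeta^{-1}$ for $s$ bounded away from $0$; such a contribution is smaller than any power of $\zeta$ and is therefore negligible at every algebraic order, consistent with the displayed expansion. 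Finally, uniform boundedness of $r_{k}(\zeta,s)$ on $[0,1]$ follows from the contraction bound $\left\Vert \mathcal{E}\right\Vert_{1,1}\le1$ together with the continuity—hence boundedness on the compact interval—of $b_{k+1}$ and $\dot b_{k+1}$, which is exactly what $C^{k+2}$ supplies. I expect the genuine subtlety to reside not in the bookkeeping but in justifying that the initial boundary layer is exponentially suppressed, as this is what distinguishes the open-system expansion from the closed-system one and anchors the asymmetry emphasized in the paper.
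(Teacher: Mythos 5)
Your route is genuinely different from the paper's: the paper simply specializes Theorem 6 of Avron \emph{et al.}~and only has to check, via the ergodicity identity $\dot{P}Q=0$ (equivalently $P=|\sigma\rangle\langle\1|$, which is your $P(s)Y=\sigma(s)\Tr[Y]$), that the parallel-transported terms $a_n$ collapse to $\sigma(s)\delta_{n,0}$; you instead rederive the whole expansion from scratch. Most of your mechanics are sound: the hierarchy $\mathcal{L}b_{n+1}=\dot{b}_n$, the solvability condition from trace preservation, the gauge $Pb_n=0$, and the Duhamel-plus-integration-by-parts step that produces exactly the remainder of Eq.~\eqref{eq:remainder}. (One small correction: semisimplicity of the zero eigenvalue already follows from trace preservation plus uniqueness---$\mathrm{Ran}\,\mathcal{L}$ is traceless while $\Tr\sigma=1$, so $\ker\mathcal{L}\cap\mathrm{Ran}\,\mathcal{L}=\{0\}$---you neither need nor have the contraction hypothesis for this.)

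The genuine gap is your treatment of the initial-layer term. Your own bookkeeping produces the exact identity
\begin{equation*}
\rho(s)=\sigma(s)+\sum_{n=1}^{k}\zeta^{n}\bigl[b_{n}(s)-\mathcal{E}(s,0)b_{n}(0)\bigr]+\zeta^{k+1}r_{k}(\zeta,s),
\end{equation*}
and the extra terms $-\zeta^{n}\mathcal{E}(s,0)b_{n}(0)$ are $O(\zeta)$, not $O(\zeta^{k+1})$: they cannot be absorbed into the displayed remainder, so ``negligible at every algebraic order'' does not yield Eq.~\eqref{eq:adia_series1} with the remainder \eqref{eq:remainder} as stated. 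Moreover, the exponential suppression you invoke is not available: Proposition~\ref{prop:adiabatic_series} does not assume the contraction property (that hypothesis enters only in Proposition~\ref{prop:boundary_cancellation}), and even granting it, the gap condition merely separates $0$ from the rest of the spectrum of each fixed $\mathcal{L}(s)$---the remaining spectrum of a contraction generator may sit on the imaginary axis, in which case $\mathcal{E}(s,0)b_{1}(0)$ does not decay at all and contributes at order $\zeta$. To close the argument you must either carry these boundary-layer terms explicitly in the statement (which is what delegating to the exact expansion and remainder formula of Avron \emph{et al.}~accomplishes in the paper's proof), or add and use a genuine relaxation hypothesis (a uniform strictly negative real part of the spectrum on $\mathrm{Ran}\,Q$, together with a proof that the time-ordered propagator decays on $\mathrm{Ran}\,Q(0)$). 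Your sketch supplies neither, and this is precisely the point where the closed-system/open-system asymmetry you mention has to be earned rather than asserted.
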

The proof is provided in Appendix~\ref{app:proof-prop-1}.

Note that the Liouvillian has dimension of 1/time. We could rescale $\mathcal{L}(s) = \frac{1}{\tau_0} \tilde{\mathcal{L}}(s)$ where $\tilde{\mathcal{L}}(s)$ is now dimensionless and $\tau_0$ is the natural time-scale of the process. The exact value of $\tau_0$ is to some extent arbitrary (it can be fixed by fixing the norm of $\tilde{\mathcal{L}}(s)$ at some $s$). More concretely, in quantum information processing experiments, $\mathcal{L}(s)$ is typically a perturbation of some Hamiltonian evolution, and so it is reasonable to take $\tau_0 = 1/J$ where $J$ is the energy scale of the Hamiltonian (we use units in which $\hbar =1$). After this rescaling all the formulas remain unchanged and  $\tau \mapsto \tau/\tau_0$. We see then that in Eq.~\eqref{eq:adia_series1} the expansion parameter is effectively the appropriately dimensionless quantity $\zeta = \tau_0/\tau$, while all the other quantities are also dimensionless. This expansion parameter is small when $\tau \gg \tau_0$ where $\tau$ is the timescale on which we change the Liouvillian. 

The following is a strengthening of a similar result contained in
\cite{avron_adiabatic_2012}, and introduces the assumption of vanishing boundary derivatives.
\begin{prop}
\label{prop:boundary_cancellation}
Under the same assumptions as in Proposition~\ref{prop:adiabatic_series}, with the additional assumptions that $\mathcal{L}(s)$ is independent of $\tau$, generates
a contraction semigroup, i.e., $\left\Vert e^{r\mathcal{L}(s)}\right\Vert _{1}\le1$
for each $r,s\ge0$, and that $\mathcal{L}^{(j)}(1)=0$ for $j=1,2,\ldots,k$ (vanishing derivatives at the final time):
\begin{equation}
\left\Vert \rho(1)-\sigma(1)\right\Vert_1 \le\frac{C_{k}}{\tau^{k+1}} ,
\label{eq:bound_BC}
\end{equation}
where $C_{k}$ is a constant independent of $\tau$.
\end{prop}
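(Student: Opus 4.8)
The plan is to evaluate the adiabatic series of Proposition~\ref{prop:adiabatic_series} at the final time $s=1$ and to exploit the two new hypotheses separately: the vanishing boundary derivatives will annihilate every polynomial coefficient $b_{n}(1)$, while the contraction property will bound the remainder uniformly in $\tau$. Setting $s=1$ in Eq.~\eqref{eq:adia_series1} gives
\begin{equation}
\rho(1)-\sigma(1)=\sum_{n=1}^{k}\zeta^{n}b_{n}(1)+\zeta^{k+1}r_{k}(\zeta,1),
\end{equation}
so it suffices to establish (i) $b_{n}(1)=0$ for $n=1,\dots,k$, and (ii) $\|r_{k}(\zeta,1)\|_{1}\le C_{k}$ with $C_{k}$ independent of $\tau$.

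For step (i) I would transfer the vanishing of the derivatives of $\mathcal{L}$ at $s=1$ to the steady state $\sigma$. Differentiating $\mathcal{L}(s)\sigma(s)=0$ exactly $j$ times by the Leibniz rule and evaluating at $s=1$, every term carrying a factor $\mathcal{L}^{(i)}(1)$ with $1\le i\le j\le k$ drops out, leaving $\mathcal{L}(1)\sigma^{(j)}(1)=0$. Since $\Tr\sigma\equiv1$ forces $\Tr\sigma^{(j)}(1)=0$ while the steady state is unique (ergodicity), the only trace-free element of $\ker\mathcal{L}(1)=\mathrm{span}\{\sigma(1)\}$ is zero, hence $\sigma^{(j)}(1)=0$ for $j=1,\dots,k$. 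Thus $\dot\sigma(s)=O((s-1)^{k})$. Feeding this into $b_{1}=S\dot\sigma$ and $b_{n+1}=S\dot b_{n}$, and using that $S$ and its derivatives are bounded on $[0,1]$ (guaranteed by $\mathcal{L}\in C^{k+2}$ and the gap condition, under which $S$ is a smooth function of $\mathcal{L}$ through the resolvent), a short induction gives $b_{n}(s)=O((s-1)^{k-n+1})$. In particular $b_{n}(1)=0$ for every $n\le k$, so that $\rho(1)-\sigma(1)=\zeta^{k+1}r_{k}(\zeta,1)$.

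For step (ii) I would bound the explicit remainder \eqref{eq:remainder} at $s=1$ by the triangle inequality,
\begin{equation}
\begin{split}
\|r_{k}(\zeta,1)\|_{1}
&\le \|b_{k+1}(1)\|_{1}+\|\mathcal{E}(1,0)\|_{1,1}\,\|b_{k+1}(0)\|_{1}\\
&\quad+\int_{0}^{1}\|\mathcal{E}(1,s')\|_{1,1}\,\|\dot b_{k+1}(s')\|_{1}\,ds'.
\end{split}
\end{equation}
Here the contraction hypothesis is decisive: since each $\mathcal{L}(s)$ generates a contraction, the two-time propagator inherits $\|\mathcal{E}(1,s')\|_{1,1}\le1$ for all $s'\le1$, uniformly in $\tau$ (via the standard estimate $\tfrac{d}{ds}\|\mathcal{E}(s,s')x\|_{1}\le0$, equivalently a Trotter product of short-time contractions). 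Because $\mathcal{L}$ is $C^{k+2}$ and, by assumption, $\tau$-independent, the coefficients $b_{k+1}$ and $\dot b_{k+1}$ are continuous and $\tau$-independent, hence bounded on the compact interval $[0,1]$; the three contributions are therefore bounded by a single $\tau$-independent constant $C_{k}$. Combining with step (i) yields $\|\rho(1)-\sigma(1)\|_{1}=\zeta^{k+1}\|r_{k}(\zeta,1)\|_{1}\le C_{k}/\tau^{k+1}$.

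The main obstacle is precisely this uniform-in-$\tau$ control of the remainder, which is exactly where the contraction hypothesis—absent in Proposition~\ref{prop:adiabatic_series}—is used. It guarantees that the propagator neither amplifies the source term $\dot b_{k+1}$ nor the initial-time datum $b_{k+1}(0)$, so that the $s=0$ boundary contribution is damped rather than flattened. It is this one-sided damping, inherited from the directionality of the contraction estimate, that removes any need to impose vanishing derivatives at the \emph{beginning} of the evolution and makes the condition one-sided, encoding the arrow of time of the dissipative dynamics. The only genuinely technical point is verifying the contraction bound for the time-ordered propagator of a time-dependent generator; once that is in hand, the remainder estimate and the smoothness-induced vanishing of the $b_{n}(1)$ are routine.
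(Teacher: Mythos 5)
Your proposal is correct and follows the paper's skeleton exactly — evaluate the adiabatic expansion of Proposition~\ref{prop:adiabatic_series} at $s=1$, show the polynomial coefficients $b_n(1)$ vanish, and bound the remainder via the contraction property of $\mathcal{E}(1,s')$ inherited from the semigroup hypothesis (your step (ii) is essentially verbatim the paper's, including the resulting constant $C_k$). The one place you genuinely diverge is the mechanism for killing the $b_n(1)$. The paper differentiates the resolvent identity $\partial_s(\mathcal{L}-z)^{-1}=-(\mathcal{L}-z)^{-1}\dot{\mathcal{L}}(\mathcal{L}-z)^{-1}$ and the contour representation of $P(s)$ to conclude that \emph{all} of $P^{(j)}(1)$, $Q^{(j)}(1)$, $S^{(j)}(1)$, $\sigma^{(j)}(1)$ vanish for $j\le k$, after which every term in $b_n(1)$ dies because each contains at least one such derivative. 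You instead differentiate the eigenvalue equation $\mathcal{L}(s)\sigma(s)=0$ and invoke ergodicity plus trace preservation to get only $\sigma^{(j)}(1)=0$, then run an order-counting induction $b_n=O((s-1)^{k-n+1})$ that needs merely \emph{boundedness} of $S$ and its derivatives, not their vanishing. Both arguments are valid; the paper's yields a stronger intermediate fact in one stroke, while yours is more economical in hypotheses about $S$ and makes transparent that the whole cancellation is carried by the flatness of the steady-state trajectory $\sigma(s)$ at $s=1$ (the induction does need to be phrased as vanishing of the derivatives $b_n^{(i)}(1)$ for $i\le k-n$, not just pointwise order estimates, but that is exactly what your Leibniz computation delivers). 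Your closing remark correctly identifies the one-sidedness of the contraction bound as the source of the asymmetry with the unitary case.
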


\begin{proof}
We first note that if $\mathcal{L}^{(j)}(s_{0})=0$ for $j=1,2,\ldots,k$
then, $\partial_{s}^{(j)}[(\mathcal{L}(s)-z)^{-1}]_{s=s_{0}}=0$.
This follows from 
\begin{equation}
\frac{\partial}{\partial s}\frac{1}{\mathcal{L}-z}=-\frac{1}{\mathcal{L}-z}\dot{\mathcal{L}}\frac{1}{\mathcal{L}-z}
\end{equation}
and iterating. Since the projector can be written as 
\begin{equation}
P(s)=\frac{1}{2\pi i}\oint_{\gamma}\frac{dz}{z-\mathcal{L}(s)}
\end{equation}
where $\gamma$ is a path that encircles only the zero eigenvalue
in anti-clockwise direction, it follows that also $P^{(j)}(s_{0})=0$
for $j=1,2,\ldots,k$. Moreover, it also follows immediately that $Q^{(j)}(s_{0})=S^{(j)}(s_{0})=0$
and $\sigma^{(j)}(s_{0})=P^{(j)}(s_{0})\sigma(s_{0})$ for $j=1,2,\ldots,k$.
We now use the assumptions and Proposition~\ref{prop:adiabatic_series}.
From Eqs.~\eqref{eq:b1} and \eqref{eq:bn} we see that $b_{n}(1)$ is
a sum of products of terms which contain $P$ and $S$ and their derivatives
up to order $n$ at $s=1$. All of these derivatives vanish up to order $k$, and so $b_{n}(1)=0$ $\forall n\leq k$. Hence 
\beq
\rho(1)=\sigma(1)+\zeta^{k+1}r_{k}(\zeta,1).
\eeq
At this point we need to bound the error $r_{k}(\zeta,1)$. Since by assumption $\mathcal{L}(s)$ generates a contraction, it follows that $\mathcal{E}(s_{1},s_{0})$ is a contraction
for $s_{1}\ge s_{0}$ (simply use the Trotter formula to write the propagator
as a limit of products), and we can bound the final remainder as: 
\bes
\begin{align}
\left\Vert r_{k}(\zeta,1)\right\Vert_{1} & \le\left\Vert b_{k+1}(s)\right\Vert _{1}+\left\Vert \mathcal{E}(s,0)b_{k+1}(0)\right\Vert_{1} \nonumber \\
 & +\int_{0}^{s}\left\Vert \mathcal{E}(s,s')\dot{b}_{k+1}(s')\right\Vert_{1} ds' 
 \label{eq:bound_r1}\\
 & \le\Big(\left\Vert b_{k+1}(1)\right\Vert_{1} +\left\Vert b_{k+1}(0)\right\Vert_{1} \nonumber \\
 & +\sup_{s\in[0,1]}\left\Vert \dot{b}_{k+1}(s)\right\Vert_{1} \Big)=:C_{k}.
 \label{eq:bound_r2}
\end{align}
\ees
The quantity in Eq.~\eqref{eq:bound_r2} does not depend on $\tau$
and is bounded because $S(s),P(s)$ and their derivatives are bounded
by the assumption that $\mathcal{L}(s)$ is smooth. 
\end{proof}
As noted above, there is an asymmetry between the boundary cancellation result for
dissipative generators (that can admit a one-dimensional kernel) and
the corresponding result for unitary evolutions. In the latter case,
in order to have the analogue of Eq.~\eqref{eq:bound_BC} one needs
the derivatives of the generators to be zero \emph{both at the end
and at the beginning} of the evolution (see \cite{garrido_degree_1962})
up to order $k$. The technical reason is that $P(s)$ must be rank
$1$ (ergodicity) and that the kernel of $P(s)$ must be independent of $s$. The latter
condition follows from trace preservation, i.e., conservation of probabilities,
of the evolution map $\mathcal{E}$. However the rank $1$ condition
cannot be satisfied by unitary dynamics. In other words, the difference
between Proposition~\ref{prop:boundary_cancellation} and the corresponding
result for the unitary case is due to the fact that in the former
we are dealing with irreversible dynamics, i.e., there is an arrow
of time. 

Two caveats should also be noted. First, while $C_{k}$ does not depend on $\tau$, it does contain an implicit dependence on the system size, and in general will grow with the latter, necessitating a corresponding growth of $\tau$ in order to keep the adiabatic error small.
Second, since in physical models, the generator $\mathcal{L}$ also depends 
on the bath, it may seem impossible to fulfill the condition $\mathcal{L}^{(j)}(1)=0$
for $j=1,2,\ldots,k$ by controlling only the system. As we show later, this
pessimistic view fortunately turns out to be wrong. 

Proposition~\ref{prop:boundary_cancellation} guarantees that as long as $\tau \gg \tau_0$ the adiabatic error can be made arbitrarily small. More precisely (switching to the rescaled generator), taking 
\begin{equation}
\tau \ge \tau_0 \left ( \frac{\tilde{C}_{k}}{\epsilon} \right )^\frac{1}{k+1}, 
\end{equation}
where $\tilde{C}_{k}$ refers now to the dimensionless generator $\tilde{\mathcal{L}}(s)$, implies 
\begin{equation}
\left\Vert \rho(1)-\sigma(1)\right\Vert _{1}\le \epsilon.
\end{equation}
If the constants $\tilde{C}_{k}$ were independent of $k$, this would imply a $(k+1)$-root speedup with respect to the case $k=0$. This hypothesis is likely overly optimistic: in the next subsection, using fairly crude bounds, we derive estimates for $\tilde{C}_{k}$ which predict a strong dependence on $k$. On the other hand, our numerical results in Sec.~\ref{sec:application}are encouraging especially for large $\tau$ and show that asymptotically $\left\Vert \rho(1)-\sigma(1)\right\Vert _{1} \sim \tau^{-(k+1)}$ (see Fig.~\ref{fig:numerics_Davies}).

\subsection{Stability and time-scales}

We next consider the practical feasibility
of the boundary cancellation approach by asking what happens if we try to set a derivative to zero
but only achieve a small norm? We focus on the
case where one tries to set the first derivative of the generator
to zero. We use Eq.~\eqref{eq:adia_series1} with $k=0$ and $k=1$
and bound the difference $\rho(1)-\sigma(1)$. In Appendix~\ref{sec:Explicit-Constants-for}
we show that then:
\begin{equation}
\left\Vert \rho(1)-\sigma(1)\right\Vert _{1}\le\min\left\{ \frac{B_{0}}{\tau},\frac{A_{1}}{\tau}+\frac{B_{1}}{\tau^{2}}\right\} ,
\label{eq:nonvanishing}
\end{equation}
where the constants are given by (all the superoperator norms are
induced, 1-1 norms)
\bes
\begin{align}
& B_{0}  =\left.\left(\left\Vert S\right\Vert ^{2}\left\Vert \dot{\mathcal{L}}\right\Vert \right)\right|_{0}^{1} +  \sup_{s\in[0,1]}\left(6\left\Vert S\right\Vert ^{3}\left\Vert \dot{\mathcal{L}}\right\Vert ^{2}+\left\Vert S\right\Vert ^{2}\left\Vert \ddot{\mathcal{L}}\right\Vert \right) \label{eq:B0}\\
& A_{1}  =\left\Vert S(1)\right\Vert ^{2}\left\Vert \dot{\mathcal{L}}(1)\right\Vert <B_{0} \label{eq:A1} \\
& B_{1}  =\left.\left(5\left\Vert S\right\Vert ^{4}\left\Vert \dot{\mathcal{L}}\right\Vert ^{2}+\left\Vert S\right\Vert ^{3}\left\Vert \ddot{\mathcal{L}}\right\Vert \right)\right|_{0}^{1} \label{eq:B1} + \\
 &\ \  \sup_{s\in[0,1]}\left(60\left\Vert S\right\Vert ^{5}\left\Vert \dot{\mathcal{L}}\right\Vert ^{3}+19\left\Vert S\right\Vert ^{4}\left\Vert \dot{\mathcal{L}}\right\Vert \left\Vert \ddot{\mathcal{L}}\right\Vert +\left\Vert S\right\Vert ^{3}\left\Vert \dddot{\mathcal{L}}\right\Vert \right)\notag
\end{align}
\label{eq:bounds}
\ees
and we used the notation $\left.\left(X\right)\right|_{0}^{1}=X(0)+X(1)$. 
Note that since  $A_{1} <B_{0}$ the minimum in Eq.~\eqref{eq:nonvanishing} is achieved by the second (first) term in the region where $\tau$ is large (small). 
We now imagine changing the schedule in order to try to enforce boundary cancellation but we only achieve it imperfectly. We obtain a new generator $\mathcal{L}'(s)$ with derivatives  of reduced norm for  $s$ close to the end (but not strictly zero)
and hence obtain new constants $A_1' <A_1, B_{0}',  B_{1}'$. In principle the modified schedule can unpredictably change $B_{0}$ and $B_{1}$. However, we have the following monotonicity result:
\begin{prop}
\label{prop:improvement} Assume that $\mathcal{L}(s),\mathcal{L}'(s)\in C^{3}([0,1])$
and we change the schedule only close to the end of the anneal, i.e.,
$\mathcal{L}'(s)=\mathcal{L}(s)$ for $s\in[0,1-\delta]$, and that
$\Vert\mathcal{L}'^{(j)}(s)\Vert<\Vert\mathcal{L}^{(j)}(s)\Vert$
for $s$ in a neighborhood of $s=1$ independent of $\delta$, for
$j=1,2,3$. Then for sufficiently small $\delta$, boundary cancellation
provides an improvement for all values of the anneal time $\tau$. 
\end{prop}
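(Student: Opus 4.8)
The plan is to recast the claim as a termwise comparison of the two bounds of the form \eqref{eq:nonvanishing}. Writing $F(\tau)=\min\{B_{0}/\tau,\,A_{1}/\tau+B_{1}/\tau^{2}\}$ and $F'(\tau)=\min\{B_{0}'/\tau,\,A_{1}'/\tau+B_{1}'/\tau^{2}\}$ for the original and the modified schedule, ``an improvement for all $\tau$'' means $F'(\tau)<F(\tau)$ for every $\tau>0$. Because a minimum of two quantities lies strictly below $\min\{a,b\}$ as soon as it lies strictly below each of $a$ and $b$, it suffices to establish, for $\delta$ small enough, the inequalities $B_{0}'<B_{0}$, $A_{1}'<A_{1}$ and $B_{1}'\le B_{1}$: then $B_{0}'/\tau<B_{0}/\tau$ and $A_{1}'/\tau+B_{1}'/\tau^{2}<A_{1}/\tau+B_{1}/\tau^{2}$ for all $\tau$, so $F'(\tau)$ is below both arguments of the unprimed minimum and hence below $F(\tau)$. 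The inequality $A_{1}'<A_{1}$ is immediate from \eqref{eq:A1}, being exactly the assumed reduction of $\Vert\dot{\mathcal{L}}(1)\Vert$ at a fixed final generator (so that $S(1)$, and with it the target $\sigma(1)$, is unchanged). The real work is therefore to control the two supremum-carrying constants $B_{0}$ and $B_{1}$ of \eqref{eq:B0} and \eqref{eq:B1}, which the modification could \emph{a priori} inflate.

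To control them I would split $[0,1]=[0,1-\delta]\cup(1-\delta,1]$ and read each constant as a boundary term at $0$, plus a boundary term at $1$, plus a supremum over $[0,1]$. On $[0,1-\delta]$ one has $\mathcal{L}'=\mathcal{L}$, so the boundary term at $0$ and every supremand restricted to $[0,1-\delta]$ are literally unchanged. The boundary terms at $s=1$ — namely $A_{1}$ inside $B_{0}$, and $5\Vert S(1)\Vert^{4}\Vert\dot{\mathcal{L}}(1)\Vert^{2}+\Vert S(1)\Vert^{3}\Vert\ddot{\mathcal{L}}(1)\Vert$ inside $B_{1}$ — strictly decrease, since $S(1)$ is unchanged while $\Vert\dot{\mathcal{L}}(1)\Vert,\Vert\ddot{\mathcal{L}}(1)\Vert$ are reduced. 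The only term that can grow is the supremum over the transition interval $(1-\delta,1]$. Here I would use that $\mathcal{L}'$ matches $\mathcal{L}$ to order $C^{3}$ at $s=1-\delta$ and has norm-bounded (indeed reduced) derivatives there, so that $\Vert\mathcal{L}'(s)-\mathcal{L}(s)\Vert$ is of higher order in $\delta$ (the derivative difference vanishes at $1-\delta$ and is integrated twice over an interval of length $\delta$); since the gap condition holds at $s=1$ and the reduced resolvent depends continuously on the generator, the gap persists for small $\delta$ and $\Vert S'(s)-S(s)\Vert$ is of the same higher order uniformly on $(1-\delta,1]$. Combined with $\Vert\mathcal{L}'^{(j)}(s)\Vert<\Vert\mathcal{L}^{(j)}(s)\Vert$ for $j=1,2,3$ — which covers \emph{every} derivative appearing in the two supremands, up to $\ddot{\mathcal{L}}$ in $B_{0}$ and $\dddot{\mathcal{L}}$ in $B_{1}$ — this shows the supremands of $B_{0}'$ and $B_{1}'$ on $(1-\delta,1]$ are bounded by the unprimed ones plus a correction that comes solely from the higher-order change in $S$ and cannot be fed by any enlargement of the derivatives.

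Assembling the pieces, each of $B_{0}'-B_{0}$ and $B_{1}'-B_{1}$ is a strictly negative boundary gain driven by the reduction of the final-time derivatives, plus a supremum change that is of strictly higher order in $\delta$; hence for $\delta$ small enough $B_{0}'<B_{0}$ and $B_{1}'\le B_{1}$, which together with $A_{1}'<A_{1}$ gives $F'(\tau)<F(\tau)$ for all $\tau$. I expect the genuine obstacle to be precisely this supremum over the transition region: flattening the schedule could in principle demand a steep intermediate excursion of $\dot{\mathcal{L}}$ or $\ddot{\mathcal{L}}$, a spike that would raise $B_{0}$ and $B_{1}$ and destroy monotonicity. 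The hypotheses are tailored to forbid exactly this, since the domination $\Vert\mathcal{L}'^{(j)}\Vert<\Vert\mathcal{L}^{(j)}\Vert$ on a fixed neighborhood of $s=1$ excludes any enhancement of the derivatives, while taking $\delta$ small confines the whole modification to a shrinking neighborhood of $s=1$ on which, by the $C^{3}$ matching, the generator and hence $S$ move only at higher order. The one estimate to be checked with care is the uniform boundedness of $S'(s)$ as $\delta\to0$, i.e.\ that the Liouvillian gap does not close on the transition interval; it is this that legitimizes the higher-order control of the supremands and underpins the whole comparison.
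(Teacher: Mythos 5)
Your proposal is correct and follows essentially the same route as the paper: compare the bounds termwise by showing $A_{1}'<A_{1}$, $B_{0}'<B_{0}$, $B_{1}'\le B_{1}$, splitting each constant into the unchanged boundary term at $s=0$, the strictly reduced boundary term at $s=1$, and a supremum whose only possible increase is confined to $(1-\delta,1]$, where the assumed domination $\Vert\mathcal{L}'^{(j)}(s)\Vert<\Vert\mathcal{L}^{(j)}(s)\Vert$ rules it out for $\delta$ small. The one point where you go beyond the paper's argument is in explicitly controlling $\Vert S'(s)-S(s)\Vert$ on the modified interval via the $C^{3}$ matching at $s=1-\delta$ and resolvent perturbation; the paper's proof tacitly treats $\Vert S\Vert$ as unchanged there, so your added estimate is a welcome tightening rather than a departure.
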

\begin{proof}
By assumption $A_{1}'<A_{1}$. Consider $B_{0}$  given by Eq.~\eqref{eq:B0}.
We have $\Vert S'\Vert^{2}\Vert\dot{\mathcal{L'}}\Vert(0)=\Vert S\Vert^{2}\Vert\dot{\mathcal{L}}\Vert(0)$, while $\Vert S'\Vert^{2}\Vert\dot{\mathcal{L'}}\Vert(1)<\Vert S\Vert^{2}\Vert\dot{\mathcal{L}}\Vert(1)$.
Now, consider the supremum term in Eq.~\eqref{eq:B0}, which we write
as $Y=\sup_{s\in[0,1]}X(s)=X(s_{0})$; after changing the schedule
we obtain $Y'=\sup_{s'\in[0,1]}X'(s')=X'(s'_{0})$. If $Y'\le Y$
this schedule is good enough and we keep it. Conversely, assume that 
$X'(s_{0}')>X(s_{0})$. By hypothesis $X'(s)=X(s)$ for $s\in[0,1-\delta)$. We can now take $\delta$ small enough such that
$[1-\delta,1]$ is entirely in the region where $\Vert\mathcal{L}'^{(j)}(s)\Vert<\Vert\mathcal{L}^{(j)}(s)\Vert$
(for $j=1,2$). At this point we must have necessarily $X'(s_{0}')\le X(s_{0})$.
Hence $B_{0}'<B_{0}$ by Eq.~\eqref{eq:B0}. An entirely analogous argument holds for $B_{1}$,
and so there exist a $\delta$ small enough such that $A_{1}'<A_{1},B_{0}'<B_{0}$, and $B_{1}'<B_{1}$.
This implies that the adiabatic error, as predicted by Eq.~\eqref{eq:nonvanishing}, is smaller after
boundary cancellation is employed. 
\end{proof}

\begin{figure}
\begin{centering}
\includegraphics[width=6.1cm]{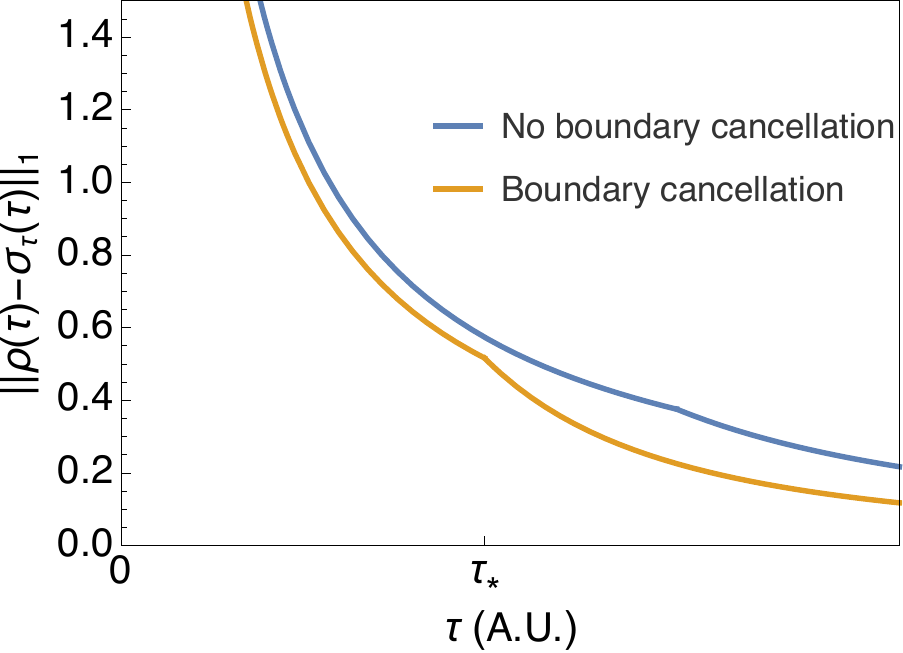}
\par\end{centering}
\caption{Schematic plot of the adiabatic error before and after imperfect boundary cancellation at first order. The blue line shows Eq.~\eqref{eq:nonvanishing} for random values of $A_1,B_0,B_1$, while for the orange line these values are (slightly) diminished in accordance with the conditions of Proposition~\ref{prop:improvement}.  There is an improvement for all values of $\tau$, although a larger improvement is predicted for $\tau>\tau_\ast = B'_{1}/\left(B_{0}'-A_{1}'\right)$.
\label{fig:bound_improved}}
\end{figure}

See Fig.~\ref{fig:bound_improved} for a plot of the improvement predicted by boundary cancellation under these circumstances. Note that our discussion is framed in terms of upper bounds, and it is possible that a larger benefit exists in a larger region than $\tau >B'_{1}/\left(B_{0}'-A_{1}'\right)$. 

Estimating the scaling of the terms in Eq.~\eqref{eq:bounds} as a function of relevant parameters such as size and temperature, is of great importance for applications, e.g., the preparation of
ground states or thermal Gibbs states in quantum annealing, where we are interested in the behavior of the adiabatic
error $\left\Vert \rho_{\tau}(\tau)-\sigma_{\tau}(\tau)\right\Vert _{1}$
with respect to the system size $N$ (number of qubits). One is then led to estimate the norm of the (reduced) resolvent.
In contrast to the closed-system case the norm of $S$ cannot be
simply evaluated, i.e.: 
\begin{equation}
\left\Vert S(s)\right\Vert _{1,1}\neq\frac{1}{\mathrm{dist}(0,\sigma(\mathcal{L}(s)))}
\end{equation}
where $\sigma(\mathcal{L}(s))$ is the spectrum of $\mathcal{L}(s)$.
This fact makes the estimates complicated.  For $k=0$ in the low temperature regime, 
such estimates were given in \cite{venuti_adiabaticity_2016,venuti_relaxation_2017}. The result is that $\left\Vert S(s)\right\Vert _{1,1}$ depends not only on the Liouvillian gap but also on the Hamiltonian one. 

\section{Application of boundary cancellation while controlling only the system Hamiltonian}
\label{sec:application}

Our goal is to apply the boundary cancellation method under realistic conditions using
master equations for time-dependent system-Hamiltonians.
I.e., given a total Hamiltonian $H_{\mathrm{tot}}(t)=H_{S}(t)+H_{I}+H_{B}$,
the sum of system, interaction, and bath Hamiltonians respectively, we wish to consider master equations in the form of Eq.~\eqref{eq:ME} with $\mathcal{L}_{\tau}(t)$ derived from first principles, while directly controlling the boundary terms of only the system Hamiltonian. We will consider three such master equations. Henceforth we write the interaction Hamiltonian explicitly in the general form 
$H_{I}=g\sum_{\alpha}A_{\alpha}\otimes B_{\alpha}$.

\subsection{Master equations}
The first master equation is 
the Davies-Lindblad adiabatic master equation (DLAME) derived in \cite{albash_quantum_2012}.
%
Its generator is
given by  
\bes
\label{eq:davies}
\begin{align}
\mathcal{L}_{\tau}(t) & =-i\left[H_{S}(t)+H_{LS}(t),\bullet\right] + \mathcal{L}^D_{\tau}(t)  \\
 \mathcal{L}^D_{\tau}(t) & =\sum_{\alpha,\beta,\omega_{n}}\gamma_{\alpha,\beta}(\omega_{n})\Big(A_{\beta}(\omega_{n})\bullet A^{\dagger}_{\alpha}(\omega_{n})\nonumber \\
 & -\frac{1}{2}\left\{ A^{\dagger}_{\alpha}(\omega_{n})A_{\beta}(\omega_{n}),\bullet\right\} \Big).\end{align}
\ees
Here $H_{LS}(s)$ is the Lamb-shift term, $\gamma_{\alpha,\beta}(\omega)$
is the Fourier transform of the bath-correlation function
\beq
G_{\alpha,\beta}(t,s):=g^{2}\langle B_{\alpha}(t)B_{\beta}(s)\rangle=G_{\alpha,\beta}(t-s) ,
\eeq
and $\omega_{n}(s)$ are the Bohr frequencies of $H_S(s)$ (to simplify notation we suppress their explicit time-dependence when convenient). The Lindblad jump operators $A_{\alpha}(\omega_{n})$
that appear in the Davies generator $\mathcal{L}^D$ are given by
\begin{equation}
e^{itH(s)}A_{\alpha}e^{-itH(s)}=\sum_{\omega_{n}}e^{-it\omega_{n}}A_{\alpha}(\omega_{n}),
\label{eq:jump}
\end{equation}
and
\beq
H_{LS} = \sum_{\alpha,\beta,\omega_{n}} S_{\alpha,\beta}(\omega_n)A^\dagger_{\alpha}(\omega_n)A_{\beta}(\omega_n) ,
\eeq
with
\beq
S_{\alpha,\beta}(\omega) = \int_{-\infty}^{\infty} d\omega' \gamma_{\alpha\beta}(\omega') \mathcal{P}\left(\frac{1}{\omega-\omega'}\right) ,
\eeq
where $\mathcal{P}$ is the Cauchy principal value.
This master equation preserves complete positivity and assumes that the system evolves adiabatically.

The second master equation is the the Schr\"{o}dinger picture Redfield master equation (SPRME), which we write as:
\begin{align}
&\frac{\partial\rho(t)}{\partial t}  =-i\left[H_{S}(t),\rho(t)\right]\nonumber \\
 & +\sum_{\alpha,\beta}\int_{0}^{t}drG_{\alpha,\beta}(r)\,\left[U_{0}(t,t-r)A_{\beta}U_{0}(t-r,t)\rho(t),A_{\alpha}\right]\nonumber \\
 & +\hc  ,
 \label{eq:nonadia_red}
\end{align}
where the unperturbed propagator is generated purely by the system Hamiltonian, i.e., is the solution of $\partial_t{U}_{0}(t,0)=-iH_{S}(t)U_{0}(t,0)$
with the boundary condition $U_{0}(0,0)=\1$.
The SPRME is a generalization of the standard Redfield master equation, which is typically written in the interaction picture for time-independent system Hamiltonians~\cite{breuer_theory_2007}.

The third master equation is obtained after performing an adiabatic-type approximation on Eq.~\eqref{eq:nonadia_red}, so we call it the adiabatic Redfield master
equation (ARME):
\begin{align}
 \label{eq:adia_red}
\frac{\partial\rho(t)}{\partial t}  & =-i\left[H_{S}(t),\rho(t)\right] \\
 & +\sum_{\alpha,\beta}\int_{0}^{\infty}drG_{\alpha,\beta}(r)\,\left[A_{\beta}(-r,t)\rho(t),A_{\alpha}\right]+\hc \nonumber
\end{align}
where 
\beq
A_{\beta}(-r,t)=e^{-irH_{S}(t)}A_{\beta}e^{irH_{S}(t)}.
\eeq

We derive the SPRME and the ARME in Appendix~\ref{app:Redfield}, where we also estimate the error of the approximations involved. As far as we know these forms of the Redfield equation have not appeared previously. The SPRME has the advantage that it tolerates a bath with algebraically decaying correlation functions, the limiting case being $G_{\alpha,\beta} (t) \sim t^{-2}$, which results in a relative error growing as $\ln(\tau)$. For the same bath the ARME introduces an error $\propto \tau$; more details are given in Appendix~\ref{app:Redfield}.
We note that the Davies generator is obtained from the ARME after the rotating wave (secular) approximation, i.e., the Redfield case requires one fewer approximations. However, while Redfield theory is TPHP, unlike the Davies-Lindblad case it is notoriously not completely positive (though various fixes have been proposed~\cite{Gaspard:1999aa,Whitney:2008aa}). 

\subsection{Application of boundary cancellation}

We now investigate whether it is possible to satisfy the assumptions of Proposition \ref{prop:boundary_cancellation} under realistic conditions.

\subsubsection{The Davies-Lindblad adiabatic master equation}
We begin by considering the DLAME, Eq.~\eqref{eq:davies}. 
We assume henceforth that the system Hamiltonian is a function of
$t/\tau$ and not separately of $t$ and $\tau$. Note that for the time-dependent Davies generator this implies
that the rescaled generator $\mathcal{L}(s)$ resulting from $\mathcal{L}_{\tau}(t)$ is $\tau$-independent.
\begin{prop}
\label{prop:adia_davies}
Assume a master equation with generator in
time dependent Davies form $\mathcal{L}_{\tau}(t)$ given by Eq.~\eqref{eq:davies}.
Moreover assume that the system Hamiltonian $H_{S}(t)$ is smooth
and that the degeneracy of all the levels does not change for $t\in[0,\tau]$.
If $\partial_{t}^{(j)}[H_{S}(t)]_{t=\tau}=0$ for $j=1,2,\ldots,k$ then $\partial_{t}^{(j)}[\mathcal{L}_{\tau}(t)]_{t=\tau}=0$
for $j=1,2,\ldots,k$. Furthermore, if the steady state of $\mathcal{L}_{\tau}(t)$
is unique for $t\in[0,\tau]$ then the assumptions of Proposition
\ref{prop:boundary_cancellation} all hold, so that
\begin{equation}
\left\Vert \rho_{\tau}(\tau)-\sigma_{\tau}(\tau)\right\Vert _{1}\le\frac{C_{k}}{\tau^{k+1}}.\label{eq:BC}
\end{equation}
\end{prop}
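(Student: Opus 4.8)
The plan is to trace all the time dependence of $\mathcal{L}_\tau(t)$ back to $H_S(t)$, so that the vanishing of the derivatives of $H_S$ at $t=\tau$ propagates to the full generator, after which the remaining hypotheses of Proposition~\ref{prop:boundary_cancellation} can be verified one by one. First I would inventory the building blocks of the Davies generator in Eq.~\eqref{eq:davies} and observe that each depends on $t$ only through $H_S(t)$: the Bohr frequencies $\omega_n(t)$ are differences of eigenvalues of $H_S(t)$; the jump operators are, by Eq.~\eqref{eq:jump}, the spectral components $A_\alpha(\omega)=\sum_{\epsilon_b-\epsilon_a=\omega}\Pi_a A_\alpha\Pi_b$ assembled from the eigenprojectors $\Pi_a(t)$ of $H_S(t)$; and the scalars $\gamma_{\alpha,\beta}(\omega_n)$ and $S_{\alpha,\beta}(\omega_n)$ are fixed smooth functions of their frequency argument. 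Hence one may write $\mathcal{L}_\tau(t)=F\bigl(H_S(t)\bigr)$ for a single time-independent functional $F$.

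The next step is to show that $F$ is smooth in $H_S$. This is exactly where the hypothesis that the degeneracies are unchanged on $[0,\tau]$ enters: by standard analytic perturbation theory, as long as no crossing alters the degeneracy structure, the eigenvalues $\epsilon_a(t)$ and eigenprojectors $\Pi_a(t)$ are as smooth as $H_S(t)$ itself. Consequently the Bohr frequencies, the jump operators, and the rate coefficients are all smooth functions of $H_S(t)$, so $F$ inherits the smoothness of $H_S$.

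The core estimate then follows from the chain rule for higher derivatives (Fa\`a di Bruno): every term in $\partial_t^{(j)}[F(H_S(t))]$ is a product containing at least one factor $H_S^{(i)}(t)$ with $1\le i\le j$. Therefore, if $H_S^{(i)}(\tau)=0$ for $i=1,\ldots,k$, each term of $\partial_t^{(j)}[\mathcal{L}_\tau(t)]_{t=\tau}$ vanishes for every $j\le k$, which gives $\partial_t^{(j)}[\mathcal{L}_\tau(t)]_{t=\tau}=0$. Passing to the rescaled variable $s=t/\tau$ only rescales the $j$-th derivative by $\tau^{-j}$, so equivalently $\mathcal{L}^{(j)}(1)=0$ for $j=1,\ldots,k$.

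It remains to confirm the other hypotheses of Proposition~\ref{prop:boundary_cancellation}. Since $H_S$ is a function of $s=t/\tau$ alone, $\mathcal{L}(s)$ is $\tau$-independent; the $C^{k+2}$ regularity follows from the smoothness just established; the Davies generator is in Lindblad form for every $s$, hence TPHP and the generator of a completely positive trace-preserving semigroup, which is a trace-norm contraction; and the gap condition together with the unique steady state are assumed. Identifying $\rho(1)=\rho_\tau(\tau)$ and $\sigma(1)=\sigma_\tau(\tau)$, Proposition~\ref{prop:boundary_cancellation} then yields Eq.~\eqref{eq:BC}. The main obstacle is the smoothness step: one must guarantee that the spectral decomposition---and in particular the grouping of transitions by Bohr frequency---varies smoothly, with the constant-degeneracy assumption ruling out the non-smooth reshuffling of eigenprojectors at crossings; some extra care is needed where two distinct Bohr frequencies coincide, but since the full generator is built from the smooth projectors $\Pi_a(t)$ and the smooth scalar rates, $\mathcal{L}_\tau(t)$ stays smooth there as well and the argument goes through.
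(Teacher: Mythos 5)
Your proof follows essentially the same route as the paper's: both reduce the claim to the fact that vanishing derivatives of $H_S$ at $t=\tau$ propagate to the eigenvalues, eigenprojectors, Bohr frequencies, jump operators and Lamb shift, and then verify the remaining hypotheses of Proposition~\ref{prop:boundary_cancellation} in the same way (Lindblad form implies a trace-norm contraction semigroup, finite dimensionality plus uniqueness gives the gap, and $H_S$ depending only on $s=t/\tau$ makes $\mathcal{L}(s)$ $\tau$-independent). The only technical difference is that the paper differentiates the jump operators through the time-average representation $A_\alpha(\omega_n)=\lim_{X\to\infty}X^{-1}\int_0^X e^{it'\omega_n(t)}e^{it'H(t)}A_\alpha e^{-it'H(t)}\,dt'$ using the Duhamel formula, whereas you use the spectral decomposition together with Fa\`a di Bruno; this is cosmetic. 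One correction: your closing remark that $\mathcal{L}_\tau(t)$ ``stays smooth'' where two distinct Bohr frequencies coincide is not right --- the secular grouping $A_\beta(\omega)\bullet A_\alpha^\dagger(\omega)$ gains or loses cross terms between transitions precisely when Bohr frequencies merge or split, so the generator is generically discontinuous at such points; these crossings must be excluded by the constant-degeneracy hypothesis (the paper itself observes that without this assumption $\mathcal{L}_\tau(t)$ is not even continuous).
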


\begin{proof}
The degeneracy assumption is needed since otherwise $\mathcal{L}_{\tau}(t)$
is not even continuous. Let $H_{S}(t)=\sum_{n}E_{n}(t)\Pi_{n}(t)$, where 
$E_{n}(t)$ and $\Pi_{n}(t)$ are the instantaneous energies and eigenprojectors, respectively.
The assumptions imply that $E_{n}^{(j)}(\tau)=0$ for $j=1,2,\ldots,k$
and so the same holds for the Bohr frequencies $\omega_{n}^{(j)}(\tau)$. The Lindblad jump
operators $A_{\alpha}(\omega_{n})$ that appear in the Davies generator
appear in Eq.~\eqref{eq:jump}, whereby:
\begin{equation}
A_{\alpha}(\omega_{n})=\lim_{X\to\infty}\frac{1}{X}\int_{0}^{X}dt'e^{it'\omega_{n}(t)}e^{it'H(t)}A_{\alpha}e^{-it'H(t)}.
\end{equation}
Now, using the Duhamel formula 
\begin{equation}
\partial_{t}e^{B(t)}=\int_{0}^{1}dre^{rB(t)}\left(\partial_{t}B\right)e^{(1-r)B(t)}
\end{equation}
repeatedly, together with $\omega_{n}^{(j)}(\tau)=0$ for $j=1,2,\ldots,k$,
one obtains 
\begin{equation}
\partial_{t}^{(j)}\left[A_{\alpha}(\omega)\right]_{t=\tau}=0,\,\,\mathrm{for}\,\,\,j=1,2,\ldots,k ,
\end{equation}
which ensures that both $\partial_{t}^{(j)}[\mathcal{L}^D_{\tau}]_{t=\tau}$ and $\partial_{t}^{(j)}[H_{LS}]_{t=\tau}$ [Eq.~\eqref{eq:davies}] vanish, so that the Proposition~\ref{prop:boundary_cancellation} assumption that 
$\mathcal{L}^{(j)}(1)=0$ for $j=1,2,\ldots,k$ is also satisfied.

To prove the last assertion of the current Proposition note that the assumptions,
together with finite dimensionality, imply that the zero eigenvalue
is separated by a finite gap from the rest of the spectrum. Moreover,
a theorem due to Kossakowski \cite{kossakowski_quantum_1972} (see
also Theorem 3.3.1 of \cite{rivas_open_2012}) states that a Lindbladian generates
a contraction semigroup, so we can apply Proposition \ref{prop:boundary_cancellation}.
More specifically, the Lindbladian assumption implies that $\left\Vert \mathcal{E}(s,s')\right\Vert _{1}\le1$
for $s\ge s'$, so that we can go from Eq.~\eqref{eq:bound_r1} to
Eq.~\eqref{eq:bound_r2}. Since $\mathcal{L}(s)$ is independent of $\tau$,
the right hand side of Eq.~\eqref{eq:bound_r2} is (bounded and) independent
of $\tau$ and the result follows. 
\end{proof}

\begin{figure}[t]
\subfigure[\ $T=1$mK]{\includegraphics[width=7cm]{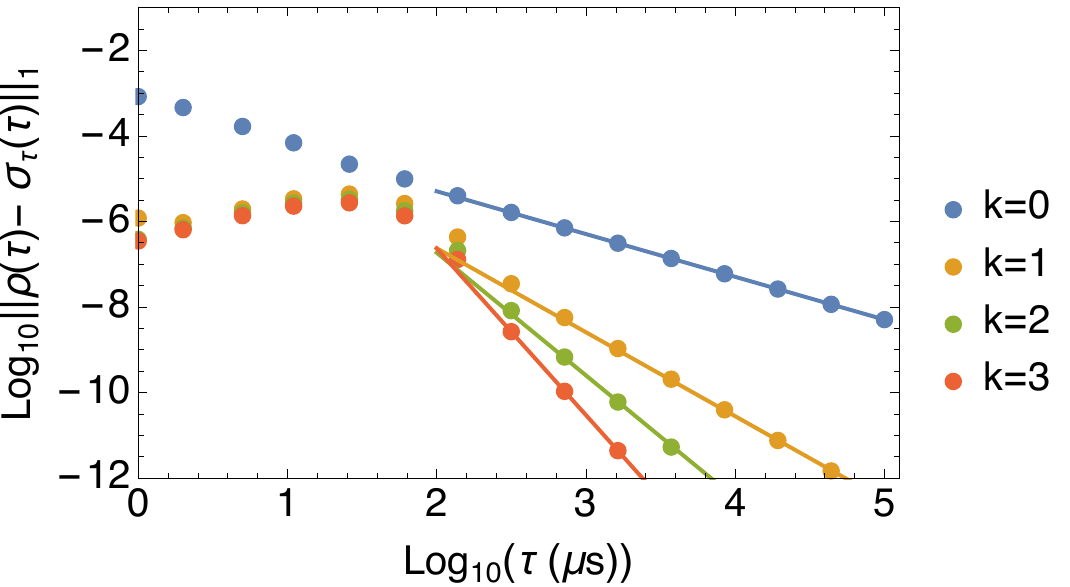}}
\subfigure[\ $T=12$mK]{\includegraphics[width=7cm]{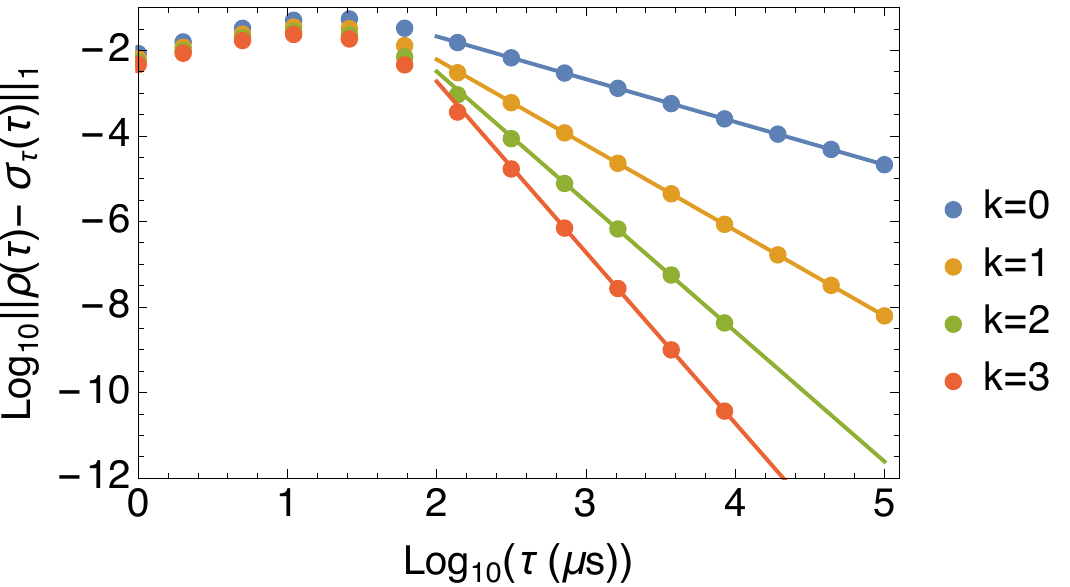}}
\subfigure[\ $T=20$mK]{\includegraphics[width=7cm]{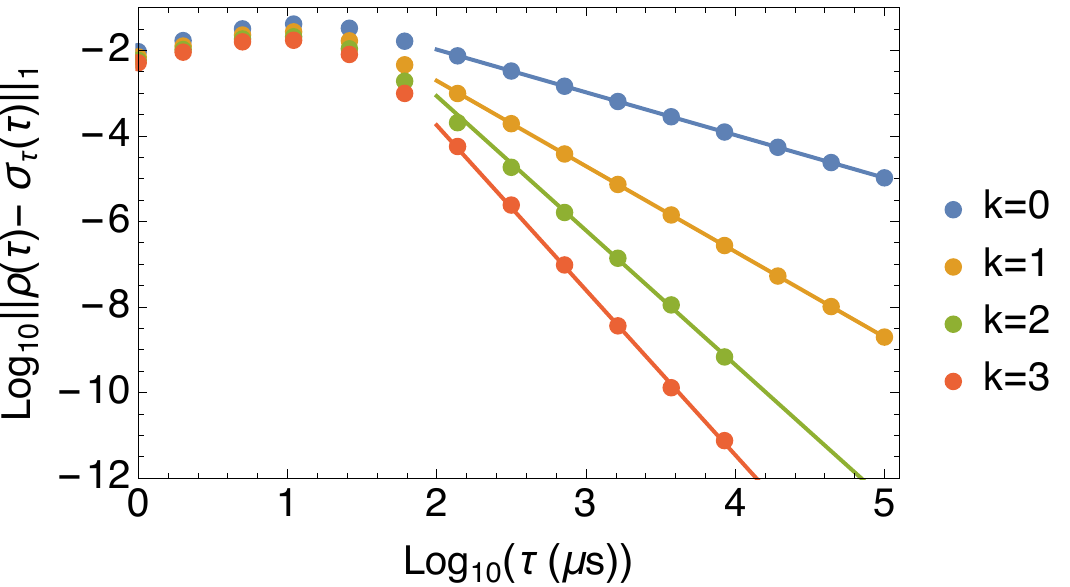}}
\caption{Adiabatic error $\left\Vert \rho_{\tau}(\tau)-\sigma_{\tau}(\tau)\right\Vert _{1}$
 as a function of annealing time $\tau$ using the boundary cancellation
method for different $k$'s and different temperatures, for the Davies-Lindblad adiabatic master equation with an Ohmic bath.
Parameters are: $g=10^{-5/2}$GHz $=3.16$MHz, $\omega_{c}=8\pi$GHZ$=25.13$GHz and $\eta =1$(ns)$^2$. These parameters can describe experiments with flux qubits \cite{Harris:2010kx,Weber:2017aa}. 
The continuous lines are best fits of the form $\sim \tau^{-\alpha_k}$. The resulting exponents  for $k=(0,1,2,3)$ are given by (a) $\alpha=(1.00,1.96,2.86,3.89)$, (b)
$\alpha=(0.99,1.99,3.03,3.99)$, (c) $\alpha=(0.99,1.99,3.14,3.85)$.
\label{fig:numerics_Davies}}
\end{figure}

Results of numerical simulations for a single qubit evolving according
to Eq.~\eqref{eq:davies} are shown in Fig.~\ref{fig:numerics_Davies}.
The Hamiltonian is taken to be $H_{S}(t)=\omega_{x}\sigma^{x}[1-\vartheta_{k}(t/\tau)]+\omega_{z}\sigma^{z}\vartheta_{k}(t/\tau)$.
The schedule is given by $\vartheta_{k}(s)=2B_{(s+1)/2}(k+1,k+1)/B_{1}(k+1,k+1)$
where $B_{s}(a,b)$ is the incomplete Beta function \cite{RPL:10}.
It has the property of having vanishing derivatives at $s=1$ up to
order $k$ but not for $k+1$. The system-bath operator is $A=\sigma^{y}$,
and the bath correlation function is Ohmic, i.e.,
\begin{equation}
\hat{G}(\omega):=\int_{-\infty}^{+\infty}e^{i\omega t}G(t)dt=g^{2} \eta 2\pi\frac{\omega e^{-\left|\omega\right|/\omega_{c}}}{1-e^{-\beta\omega}},
\end{equation}
where $\eta$ is a constant with dimension of time squared. The simulations are carried out for an annealing time $\tau$ that is sufficiently large for the asymptotic region to be reached, where $\left\Vert \rho_{\tau}(\tau)-\sigma_{\tau}(\tau)\right\Vert _{1}\sim\tau^{-\alpha_k}$, where $\alpha_k \approx k+1$ (fits in Fig.~\ref{fig:numerics_Davies}). Note that for the
DLAME, the instantaneous steady state is
given by the thermal Gibbs state: $\sigma_{\tau}(t)=\rho_{G}(t):=\exp\left(-\beta H_{S}(t)\right)/Z$. 

\subsubsection{The Redfield master equations}

We now turn our attention to the two types of Redfield master equations.
We first consider the ARME. 
\begin{prop}
\label{prop:adia_red}
Assume the adiabatic Redfield master equation [Eq.~\eqref{eq:adia_red}] holds. Moreover, assume that the system Hamiltonian $H_{S}(t)$ is smooth and 
$\partial_{t}^{(j)}[H_{S}(t)]_{t=\tau}=0$ for $j=1,2,\ldots,k$. Then $\partial_{t}^{(j)}[\mathcal{L}_{\tau}(t)]_{t=\tau}=0$ for $j=1,2,\ldots,k$.
As a consequence, the adiabatic expansion \eqref{eq:adia_series1} holds with $b_{n}(s)=0$
for $n=1,2,\ldots,k$. 
\end{prop}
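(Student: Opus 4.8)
The plan is to mirror the two-stage strategy of Proposition~\ref{prop:adia_davies}: first show that vanishing boundary derivatives of $H_S$ propagate to the full generator $\mathcal{L}_\tau(t)$, and then invoke the mechanism of Proposition~\ref{prop:boundary_cancellation} to annihilate the adiabatic coefficients $b_n$ at the final time. In the ARME generator [Eq.~\eqref{eq:adia_red}] the time-dependence enters only through $H_S(t)$, in two distinct places: the coherent commutator $-i[H_S(t),\bullet]$, and the rotated operators $A_\beta(-r,t)=e^{-irH_S(t)}A_\beta e^{irH_S(t)}$ appearing inside the dissipative integral. Because $G_{\alpha,\beta}(r)$ carries no $t$-dependence and the integration limits $[0,\infty)$ are fixed, these are the only sources of time-dependence to track.

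First I would dispatch the coherent term, where $\partial_t^{(j)}(-i[H_S(t),\bullet])=-i[\partial_t^{(j)}H_S(t),\bullet]$ vanishes at $t=\tau$ for $j=1,\ldots,k$ by hypothesis. For the dissipative term I would differentiate $A_\beta(-r,t)$ using the Duhamel formula exactly as in the proof of Proposition~\ref{prop:adia_davies}. Writing $A_\beta(-r,t)=U(t)A_\beta U(t)^\dagger$ with $U(t)=e^{-irH_S(t)}$ and applying the Leibniz rule, every term in $\partial_t^{(j)}[A_\beta(-r,t)]$ with $1\le j\le k$ carries at least one factor $\partial_t^{(m)}H_S(t)$ with $1\le m\le j\le k$; all such factors vanish at $t=\tau$, so $\partial_t^{(j)}[A_\beta(-r,t)]_{t=\tau}=0$ for each fixed $r$. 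Combining the coherent and dissipative contributions then yields $\partial_t^{(j)}[\mathcal{L}_\tau(t)]_{t=\tau}=0$ for $j=1,\ldots,k$, which is the first assertion.

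The second assertion follows from the argument already carried out in Proposition~\ref{prop:boundary_cancellation}. Passing to the rescaled variable $s=t/\tau$ (using that $H_S$, and therefore $\mathcal{L}$, is a function of $s$ alone so that $\partial_t^{(j)}$-vanishing at $t=\tau$ becomes $\partial_s^{(j)}$-vanishing at $s=1$), the relation $\mathcal{L}^{(j)}(1)=0$ for $j=1,\ldots,k$ forces $P^{(j)}(1)=S^{(j)}(1)=0$ through the contour-integral representation of the spectral projector. Since each $b_n(1)$ is a sum of products of $P$, $S$ and their derivatives up to order $n\le k$ [Eqs.~\eqref{eq:b1}--\eqref{eq:bn}], we conclude $b_n(1)=0$ for $n=1,\ldots,k$. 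I would stress that, in contrast to the Davies case, the proposition stops precisely here and does not assert the bound $C_k/\tau^{k+1}$: the Redfield generator is TPHP but is not guaranteed to be completely positive, hence need not generate a contraction semigroup, and the remainder $r_k(\zeta,1)$ cannot be controlled by the passage from Eq.~\eqref{eq:bound_r1} to Eq.~\eqref{eq:bound_r2}.

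The main obstacle I anticipate is purely technical and lives in the dissipative term: justifying the interchange of $\partial_t^{(j)}$ with the semi-infinite integral $\int_0^\infty dr$. Each $t$-derivative of $A_\beta(-r,t)$ pulls down a factor of $r$ from the exponent $-irH_S(t)$, so the $j$-th derivative grows like $r^{j}$ in the integrand, and differentiation under the integral sign up to order $k$ requires $G_{\alpha,\beta}(r)$ to decay faster than $r^{-(k+1)}$. I would handle this by invoking the decay assumptions on the bath correlation function established in the derivation of the ARME in Appendix~\ref{app:Redfield}; once differentiation and integration may be exchanged, the pointwise-in-$r$ vanishing of the integrand at $t=\tau$ immediately delivers the vanishing of the integral, completing the argument.
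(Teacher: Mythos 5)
Your proposal is correct and follows essentially the same route as the paper, whose entire proof is the observation that repeated differentiation of $A_{\beta}(-r,t)$ at $t=\tau$ via the Duhamel formula vanishes exactly as in the proof of Proposition~\ref{prop:adia_davies}. Your additional care about justifying differentiation under the semi-infinite integral $\int_0^\infty dr$ (requiring sufficient decay of $G_{\alpha,\beta}(r)$ against the $r^{j}$ growth from the derivatives) is a legitimate technical point the paper leaves implicit, and your correct emphasis that the proposition stops short of the $C_k/\tau^{k+1}$ bound matches the paper's subsequent discussion.
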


\begin{proof}
The result follows simply by repeatedly taking the derivative of $A_{\beta}(-r,t)$
with respect to $t$ at $t=\tau$ using the Duhamel formula, exactly as in the proof of Proposition~\ref{prop:adia_davies}.
\end{proof}

In contrast to the DLAME case [Eq.~\eqref{eq:davies}], we cannot prove that the bound \eqref{eq:bound_BC} generally holds in the present case. The reason is that this requires bounding the error $r_{k}(\zeta,1)$ by a constant independent of $\tau$. However, since the ARME does not always generate a contraction, this is not always possible for all initial states and parameter values. 
We do not report numerical simulations for the ARME case since for the parameters chosen here the evolutions it generates turn out to be completely-positive and trace-preserving (CPTP), and hence the results will agree with Proposition~\ref{prop:adia_davies}.

We next consider the SPRME [Eq.~\eqref{eq:nonadia_red}].
In this case even Proposition~\ref{prop:adia_red} does not apply. 
For example, even if $\partial_{t}H_{S}(\tau)=0$, if we differentiate
the non-adiabatic generator once at $t=\tau$ with $\partial_{t}H_{s}(\tau)=0$
we obtain a term proportional to $G_{\alpha,\beta}(\tau)$ and a term
proportional to $\int_{0}^{\tau}r^{2}G_{\alpha,\beta}(\tau)dr=O\left(\tau_{B}^{3}g^{2}\right)$
under the assumption of a fast bath. However, both these terms are supposed
to be small and so one may hope that the conclusions of Proposition~\ref{prop:adia_davies} are qualitatively valid at least in some range
of parameters. 

\begin{figure}
\begin{centering}
\includegraphics[width=6.1cm]{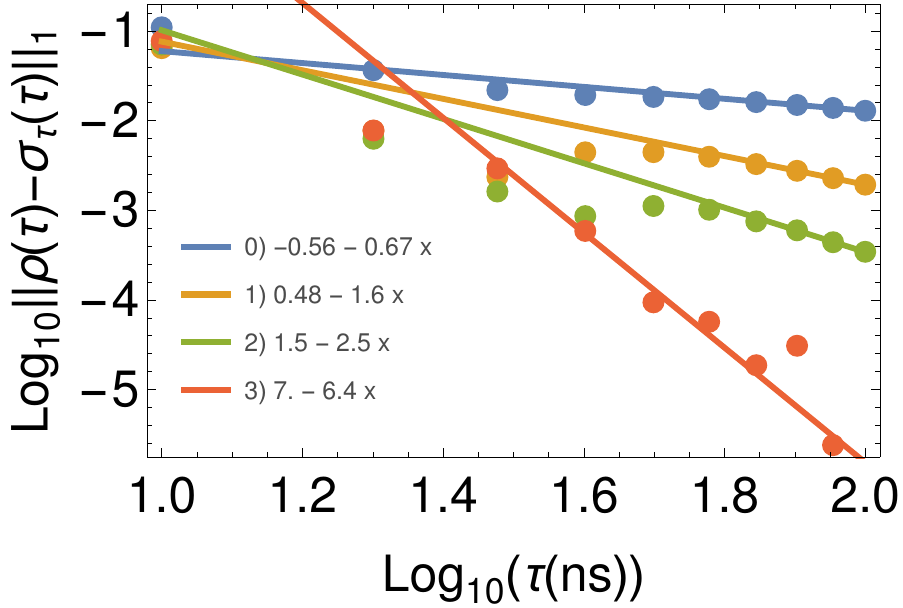}
\par\end{centering}
\caption{Adiabatic error $\left\Vert \rho_{\tau}(\tau)-\sigma_{\tau}(\tau)\right\Vert _{1}$
as a function of annealing time $\tau$ using the boundary cancellation method
for $k=0,1,2,3$ for the Schr\"odinger picture Redfield master equation \eqref{eq:nonadia_red}. 
The parameters are: $g=0.1$GHz , $\omega_{c}=16$GHz, $T=12$mK.
For $k=0,1,2$ the fit is obtained using the last four most significant
points, for $k=3$ the penultimate point has been excluded from the
fit. Note that the total annealing times here are much shorter than in
Fig.~\ref{fig:numerics_Davies} and the asymptotic
region where $\left\Vert \rho_{\tau}(\tau)-\sigma_{\tau}(\tau)\right\Vert _{1}\sim\tau^{-(k+1)}$ has not yet been reached. Boundary cancellation is seen to provide a consistent advantage for $\tau \gtrsim 20$ns.
\label{fig:nonadia_red}}
\end{figure}

In order to check the latter conjecture we performed numerical simulations
using Eq.~\eqref{eq:nonadia_red}. The results are shown in Fig.~\ref{fig:nonadia_red}, where we see that boundary cancellation improves the adiabatic error for sufficiently large annealing times. The slope seems to roughly track the $k+1$ rule expected if the assumptions of Proposition~\ref{prop:adia_davies} were to hold, but we caution that the asymptotic regime was not reached due to the heavy computational cost of these simulations. In fact, the numerical computation of the integral appearing in Eq.~\eqref{eq:nonadia_red} constitutes its own challenge; more details are given in Appendix~\ref{app:X}. 

\begin{figure}[t]
\includegraphics[width=7cm,keepaspectratio]{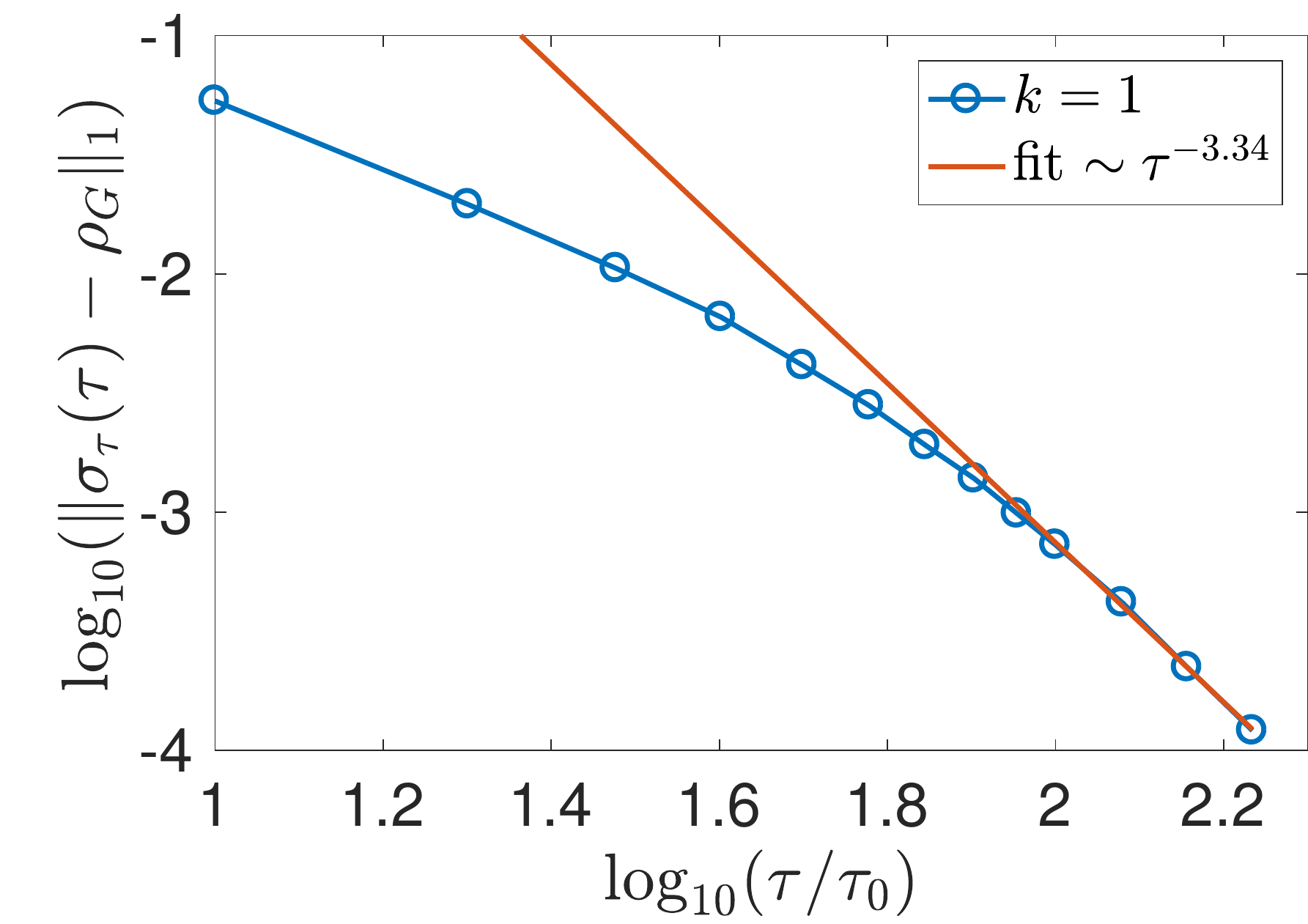}
\caption{Distance of the instantaneous steady state of the generator
\eqref{eq:nonadia_red} from the corresponding Gibbs state $\rho_{G}(\tau)$.
The timescale is $\tau_{0}=1$ns. Other parameters are the same as
in Fig.~\ref{fig:nonadia_red}. The same plot is obtained for different
values of $k$.}
\label{fig:redfield_details}
\end{figure}

Note that the generator in Eq.~\eqref{eq:nonadia_red} reduces to $-i\left[H_{S}(0),\bullet\right]$
at $t=0$. As such the instantaneous steady state is degenerate at
$t=0$. We fixed the initial state by taking $\sigma_{\tau}(0):=\lim_{t\to0^{+}}\sigma_{\tau}(t)$. It turns out that $\sigma_{\tau}(0)=\1/2$ in all of our simulations. At the other boundary $t=\tau$, the steady state approaches the thermal state $\rho_{G}(\tau)$.
In fact we have numerically checked that $\left\Vert \sigma_{\tau}(\tau)-e^{-\beta H_{S}(\tau)}/Z\right\Vert _{1}\sim\tau^{-\alpha}$; 
see Fig.~\ref{fig:redfield_details}. 

Let us also comment on complete positivity.
The precise, general
characterization of the region of parameters that ensure this condition is beyond the scope of this work. However, as shown in Fig.~\ref{fig:redfield_details2} for an Ohmic bath, complete positivity is violated at very low temperatures and large values of $\omega_{c}$.
This (counterintuitive) fact seems to be due to the presence of
fast oscillations appearing for large $\omega_{c}$.

\begin{figure}[t]
\ \ \ \
\includegraphics[width=7cm,keepaspectratio]{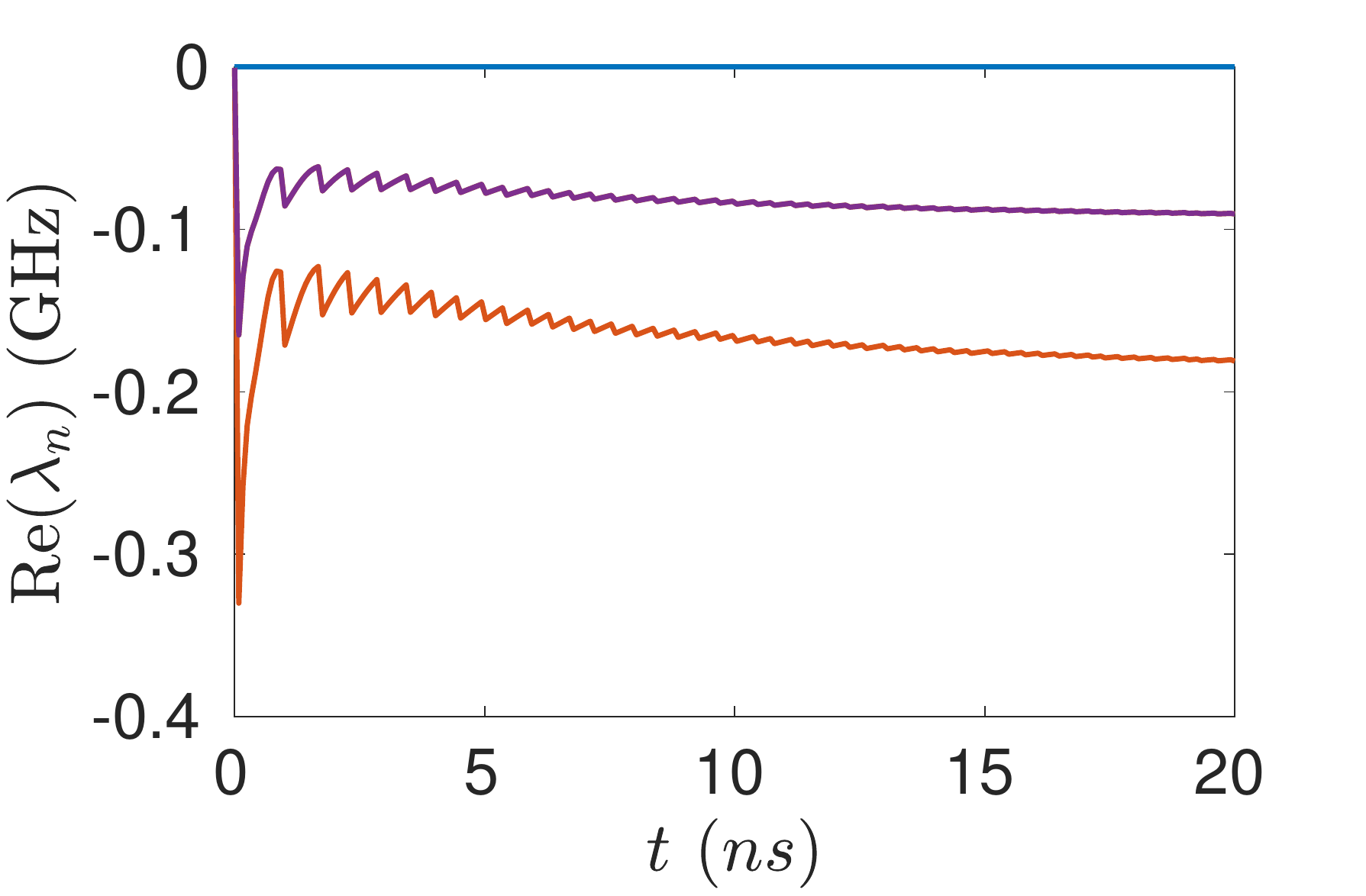}
\includegraphics[width=6.35cm,keepaspectratio]{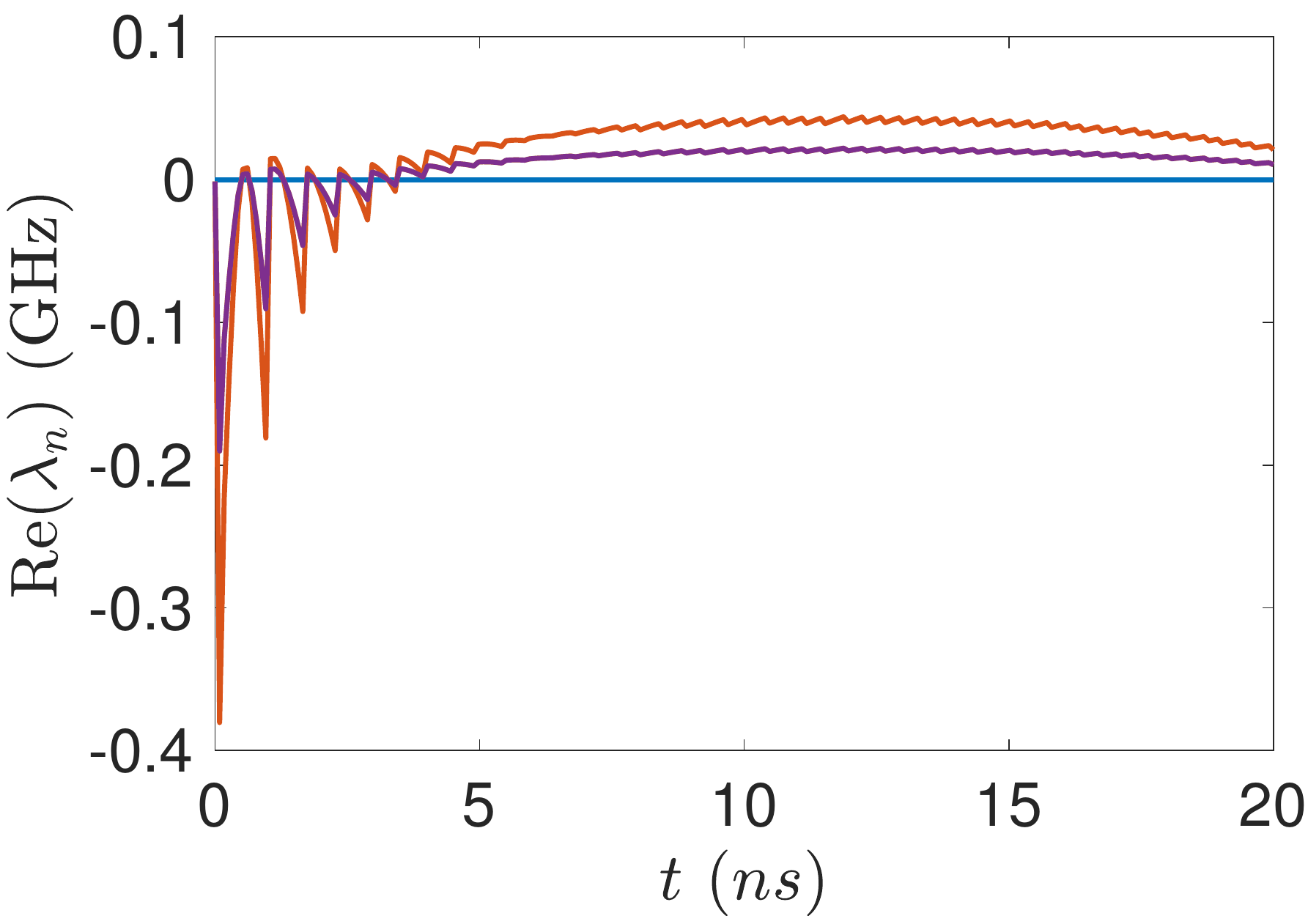}
\caption{Real part of instantaneous eigenvalues
of the generator \eqref{eq:nonadia_red}, for $\tau=20$ns and $k=0$. Top panel: 
$g=0.1$GHz , $\omega_{c}=16$GHz, $T=12$mK (as in Fig.~\ref{fig:nonadia_red}).
Bottom panel: $g=0.1$GHz, $\omega_{c}=25.13$GHZ,
$T=1$mK. In the bottom panel the instantaneous eigenvalues of the generator $\mathcal{L}_{\tau}(t)$
acquire a positive real part in some range $t\in[0,\tau]$, which means that the propagator
is not completely positive.}
\label{fig:redfield_details2}
\end{figure}

\subsubsection{The Hamiltonian case}
The next result (analogous to Theorem 4 of \cite{lidar_adiabatic_2009}) shows that the boundary cancellation result is 
stable with respect to non-Markovianity, and at the same time
that the lack of boundedness that can in principle emerge from the
Redfield master equation is unphysical. This requires that we assume vanishing derivatives at both the initial and final times.
\begin{prop}
Assume that the total Hamiltonian has the form
$H_{\mathrm{tot}}(s)=H_{S}(s)+H_{B}+H_{I}$, where only $H_{S}$
depends on the rescaled time $s$. Let $|\phi(s)\rangle$ denote the instantaneous
eigenstate of $H_{\mathrm{tot}}(s)$ related to some total energy level
and $|\psi(s)\rangle$ the Schr\"{o}dinger-evolved state starting
from $|\phi(0)\rangle$. Also assume that this level is separated
by a finite gap from the rest of the spectrum for all $s\in[0,1]$.
Let $\rho(s)=\Tr_{B}|\psi(s)\rangle\langle\psi(s)|$ and $\sigma(s)=\Tr_{B}|\phi(s)\rangle\langle\phi(s)|$.
If $H_{S}^{(j)}(0)=H_{S}^{(j)}(1)$ for $j=1,2,\ldots,k$ then
\beq
\left\Vert \rho(1)-\sigma(1)\right\Vert _{1}\le\frac{C_{k}}{\tau^{k+1}} .
\eeq
\end{prop}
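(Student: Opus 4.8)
The plan is to lift the problem to the closed system-plus-bath level, where the dynamics generated by $H_{\mathrm{tot}}(s)$ is exactly unitary, apply the standard Hamiltonian boundary cancellation theorem there, and then descend to the reduced states by tracing out the bath. The only genuinely new ingredient relative to the closed-system literature is the final partial trace, and the point is that it costs nothing in the trace norm.

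First I would observe that, since $H_B$ and $H_I$ do not depend on $s$, the derivatives of the full generator coincide with those of the system part: $H_{\mathrm{tot}}^{(j)}(s)=H_S^{(j)}(s)$ for every $j\ge 1$. Therefore the hypothesis that $H_S^{(j)}$ vanishes at both $s=0$ and $s=1$ for $j=1,\ldots,k$ is precisely the statement that $H_{\mathrm{tot}}$ has vanishing derivatives up to order $k$ at \emph{both} endpoints. Together with the assumed uniform spectral gap separating the followed level $|\phi(s)\rangle$ from the rest of the spectrum, this places us exactly in the hypotheses of the closed-system boundary cancellation result (the analogue of Theorem~4 of \cite{lidar_adiabatic_2009}) applied to $H_{\mathrm{tot}}(s)$. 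Invoking that theorem, and noting that the overall phase relating $|\psi(1)\rangle$ to $|\phi(1)\rangle$ cancels when one forms rank-one projectors, yields the joint bound
\[
\bigl\Vert\, |\psi(1)\rangle\langle\psi(1)| - |\phi(1)\rangle\langle\phi(1)| \,\bigr\Vert_1 \le \frac{\tilde{C}_k}{\tau^{k+1}},
\]
with $\tilde{C}_k$ independent of $\tau$.

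The remaining step is to trace out the bath. The partial trace $\Tr_B$ is a completely positive, trace-preserving map, hence a contraction in the trace norm: $\Vert \Tr_B X\Vert_1 \le \Vert X\Vert_1$ for all $X$ (and in particular for the Hermitian difference of projectors above). Using $\rho(1)=\Tr_B|\psi(1)\rangle\langle\psi(1)|$ and $\sigma(1)=\Tr_B|\phi(1)\rangle\langle\phi(1)|$ then gives $\Vert \rho(1)-\sigma(1)\Vert_1 \le \tilde{C}_k/\tau^{k+1}$, and setting $C_k=\tilde{C}_k$ finishes the argument. Conceptually this shows that boundary cancellation is completely insensitive to how non-Markovian the reduced dynamics $\rho(s)$ are, since the $1/\tau^{k+1}$ scaling is inherited from the underlying unitary evolution and survives the reduction intact.

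I expect the main obstacle to be purely technical: the bath Hilbert space is typically infinite-dimensional and $H_B$ may be unbounded, whereas the cleanest formulations of Hamiltonian boundary cancellation assume finite dimensions. The redeeming feature is that the time-dependent part is mild --- the derivatives $H_{\mathrm{tot}}^{(j)}=H_S^{(j)}\otimes\1$ are bounded operators acting nontrivially only on the finite-dimensional system --- so what is really needed is an infinite-dimensional version of the adiabatic expansion valid under a uniform gap and suitable domain assumptions, guaranteeing that the higher-order adiabatic coefficients and the remainder stay norm-bounded uniformly in $\tau$. Once such regularity is granted every step above is immediate, and the result also explains why the apparent lack of boundedness of the Redfield remainder $r_k$ is an artifact of the master-equation approximation rather than a property of the true physics.
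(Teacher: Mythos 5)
Your proposal is correct and follows essentially the same route as the paper's own proof: observe that the derivatives of $H_{\mathrm{tot}}$ reduce to those of $H_S$, invoke the closed-system boundary cancellation theorem of \cite{garrido_degree_1962,lidar_adiabatic_2009} at the pure-state level, and then pass to the reduced states via the fact that the partial trace is a CPTP map and hence a trace-norm contraction. Your additional remarks on the infinite-dimensional bath are a reasonable technical caveat the paper leaves implicit, but the core argument is identical.
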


\begin{proof}
Obviously, if $H_{S}^{(j)}(1)=0$ then also $H^{(j)}(1)=0$. We now apply the
analogous result of Proposition \ref{prop:boundary_cancellation}
for unitary dynamics, which requires the derivatives of the generator (the Hamiltonian)
to vanish also at $s=0$ \cite{garrido_degree_1962}. So
we have $\left\Vert |\psi(s)\rangle\langle\psi(s)|-|\phi(s)\rangle\langle\phi(s)|\right\Vert _{1}\le C_{k}\tau^{-(k+1)}$.
Since CPTP maps are contractions for the trace norm distance (i.e.,
 $\left\Vert \mathcal{E}\rho_{1}-\mathcal{E}\rho_{2}\right\Vert _{1}\le\left\Vert \rho_{1}-\rho_{2}\right\Vert _{1}$),
the result follows from the fact that the partial trace is a CPTP map:
\begin{align}
\left\Vert \rho(s)-\sigma(s)\right\Vert _{1}&\le\left\Vert |\psi(s)\rangle\langle\psi(s)|-|\phi(s)\rangle\langle\phi(s)|\right\Vert _{1} \notag \\ 
&\le\frac{C_{k}}{\tau^{k+1}} .
\end{align}
\end{proof}

\section{Summary and Conclusions}
\label{sec:conc}

We have generalized the boundary cancellation method to open systems
described by a time-dependent Liouvillian $\mathcal{L}_{\tau}(t)$.
If $\mathcal{L}_{\tau}(t)$ is ergodic (i.e., its instantaneous steady state is unique), generates a completely positive quantum map, and has vanishing
derivatives up to order $k$ at the end of the evolution, then the
adiabatic steady-state preparation error is upper bounded by $C_{k}/\tau^{k+1}$. Next,
we performed a detailed analysis to investigate whether the assumptions underlying this result can be satisfied in a realistic setting, where one controls only the system Hamiltonian. For the time-dependent Davies-Lindblad adiabatic master equation derived in \cite{albash_quantum_2012} 
the boundary cancellation result can indeed by achieved by requiring that the system Hamiltonian
has vanishing derivatives up to order $k$ only at the end of the evolution. To go beyond this setting we derived two time-dependent master equations of the Redfield type, one in the Schr\"odinger picture with a general time-dependent system Hamiltonian and the other under an additional adiabatic approximation. In this case the lack of complete positivity prevents the rigorous applicability of our result. However our numerical
simulations shows that boundary cancellation still 
holds for the Schr\"odinger picture Redfield master equation, for the range
of parameters where the evolution is positive. 
We have also shown analytically that the boundary cancellation result exhibits a degree of robustness in the sense that even if the derivatives do not exactly vanish but instead are upper bounded, then the adiabatic error can still be reduced, with the reduction being more pronounced  for longer annealing times. 

Boundary cancellation is a relatively straightforward method that
can be applied to experimental quantum annealers used to prepare steady states such as thermal Gibbs states. It allows for a smaller error at a given preparation time, or equivalently a shorter
preparation time at a given error, and should hence be used when possible.

\begin{acknowledgments}
The authors would like to thank Tameem Albash for producing the numerical simulations and fits shown in Fig.~\ref{fig:numerics_Davies}.
This research is based upon work partially supported by the Office of the Director of National Intelligence (ODNI), Intelligence Advanced Research Projects Activity (IARPA), via the U.S. Army Research Office contract W911NF-17-C-0050. The views and conclusions contained herein are those of the authors and should not be interpreted as necessarily representing the official policies or endorsements, either expressed or implied, of the ODNI, IARPA, or the U.S. Government. The U.S. Government is authorized to reproduce and distribute reprints for Governmental purposes notwithstanding any copyright annotation thereon.
\end{acknowledgments}

\appendix

\section{Proof of Proposition~\ref{prop:adiabatic_series}}
\label{app:proof-prop-1}
Proposition~\ref{prop:adiabatic_series} is a special case of Theorem 6 of \cite{avron_adiabatic_2012} and all we need to prove is that the terms $a_n(s)$ of Eq.~(15) in that reference satisfy $a_{0}(s)=T(s,0)a_{0}(0)=\sigma(s)$ and $a_{n}(s)=0$ for $n\ge1$, where $T(s,s')$ denotes the parallel transport (perfect adiabatic evolution) operator, which satisfies $P(s)T(s,s') = T(s,s')P(s')$.

\begin{proof}
We use the adiabatic series in \cite{avron_adiabatic_2012} and follow
the same notation therein with the only modification being that we replace $\mathcal{L}^{-1}$ by  $S$. The terms $a_{n}(s)$ in Theorem 6 of \cite{avron_adiabatic_2012}
satisfy $P(s)a_{n}(s)=a_{n}(s)$ while $Q(s)b_{n}(s)=b_{n}(s)$. The
initial condition implies that $\rho(0)=a_{0}(0)=\sigma(0)$ (and
$a_{n}(0)=0$ for $n\ge1$). The assumption that $\mathcal{L}(s)$
is TPHP together with uniqueness implies that $\dot{P}(s)b_{n}(s)=\dot{P}(s)Q(s)b_{n}(s)=0$.
In fact, using Hilbert-Schmidt scalar product notation, $P=|\sigma\rangle\langle\1|$,
so $\dot{P}=|\dot{\sigma}\rangle\langle\1|$ and $\dot{P}Q=0$ follows.
Since $S=SQ=QS$ we also have $\dot{P}S=S\dot{P}=0$. This implies, from Eq.~(15) of \cite{avron_adiabatic_2012}, that
that $a_{0}(s)=T(s,0)a_{0}(0)=\sigma(s)$ and $a_{n}(s)=0$ for $n\ge1$. Our Eq.~\eqref{eq:adia_series1} then follows from Eq.~(14) of \cite{avron_adiabatic_2012}. In addition, Eq.~\eqref{eq:remainder} is a special case of Eq.~(17) of \cite{avron_adiabatic_2012} under our additional initial condition assumption $\rho(0)=\sigma(0)$, which implies that $r_k(\zeta,0)=0$.
\end{proof}

Note that Eq.~\eqref{eq:adia_series1} can
be written as
\beq
\rho(s)=\sum_{n=0}^{k}\left(\zeta S(s)\frac{d}{ds}\right)^{n}\sigma(s)+\zeta^{k+1}r_{k} (\zeta,s) .
\eeq

\section{Explicit Constants for the Adiabatic Error\label{sec:Explicit-Constants-for}}

By the assumption of ergodicity $\dot{\sigma}=\dot{P}\sigma=-S\dot{\mathcal{L}}\sigma$.
We use some results from \cite{venuti_adiabaticity_2016}. We have
\bes
\begin{align}
b_{1} & =-S^{2}\dot{\mathcal{L}}\sigma\\
\dot{b}_{1} & =\left(2S^{2}\dot{\mathcal{L}}S\dot{\mathcal{L}}-2S^{3}\dot{\mathcal{L}}P\dot{\mathcal{L}}-S^{2}\ddot{\mathcal{L}}-P\dot{\mathcal{L}}S^{3}\dot{\mathcal{L}}+S\dot{\mathcal{L}}S^{2}\dot{\mathcal{L}}\right)\sigma\\
b_{2} & =S\left(2S^{2}\dot{\mathcal{L}}S\dot{\mathcal{L}}-2S^{3}\dot{\mathcal{L}}P\dot{\mathcal{L}}-S^{2}\ddot{\mathcal{L}}+S\dot{\mathcal{L}}S^{2}\dot{\mathcal{L}}\right)\sigma.
\end{align}
\ees
So 
\bes
\begin{align}
\left\Vert b_{1}\right\Vert  & \le\left\Vert S\right\Vert ^{2}\left\Vert \dot{\mathcal{L}}\right\Vert \\
\left\Vert \dot{b}_{1}\right\Vert  & \le6\left\Vert S\right\Vert ^{3}\left\Vert \dot{\mathcal{L}}\right\Vert ^{2}+\left\Vert S\right\Vert ^{2}\left\Vert \ddot{\mathcal{L}}\right\Vert \\
\left\Vert b_{2}\right\Vert  & \le5\left\Vert S\right\Vert ^{4}\left\Vert \dot{\mathcal{L}}\right\Vert ^{2}+\left\Vert S\right\Vert ^{3}\left\Vert \ddot{\mathcal{L}}\right\Vert .
\end{align}
\ees
Moreover, one can show that
\begin{align}
\left\Vert \dot{b}_{2}\right\Vert \le60\left\Vert S\right\Vert ^{5}\left\Vert \dot{\mathcal{L}}\right\Vert ^{3}+19\left\Vert S\right\Vert ^{4}\left\Vert \dot{\mathcal{L}}\right\Vert \left\Vert \ddot{\mathcal{L}}\right\Vert +\left\Vert S\right\Vert ^{3}\left\Vert \dddot{\mathcal{L}}\right\Vert .
\end{align}

Let us now use Proposition~\ref{prop:adiabatic_series} with $k=0$:
\begin{align}
\left\Vert \rho(1)-\sigma(1)\right\Vert  & \le\frac{1}{\tau}\left\Vert r_{0}(\tau,1)\right\Vert \notag \\
 & \le\frac{1}{\tau}\left(\left\Vert b_{1}(1)\right\Vert +\left\Vert b_{1}(0)\right\Vert +\sup_{s\in[0,1]}\left\Vert \dot{b}_{1}(s)\right\Vert \right)\notag \\
 & =\frac{B_{0}}{\tau} ,
\end{align}
where one can take $B_0$ as in Eq.~\eqref{eq:B0}.
Similarly we can use Proposition~\ref{prop:adiabatic_series} with $k=1$ and obtain: 
\begin{align}
\left\Vert \rho(1)-\sigma(1)\right\Vert \le\frac{1}{\tau}\left\Vert b_{1}(1)\right\Vert +\frac{1}{\tau^{2}}\left\Vert r_{1}(\tau,1)\right\Vert ,
\end{align}
implying
\begin{equation}
\left\Vert \rho(1)-\sigma(1)\right\Vert \le\frac{A_{1}}{\tau}+\frac{B_{1}}{\tau^{2}},\label{eq:bound_t1_t2}
\end{equation}
with $A_1$ and $B_1$ as in Eqs.~\eqref{eq:A1} and \eqref{eq:B1}, respectively.

\section{Derivation of the Schr\"{o}dinger picture and adiabatic Redfield master equations}
\label{app:Redfield}

\subsection{SPRME}

The first few steps are customary. In the interaction picture, after the Born approximation (see Eq.~(3.116) of \cite{breuer_theory_2007} and additional details therein) the dynamics of the system's density matrix in the interaction picture $\rho_{I}$ are given by
\beq
\dot{\rho}_{I}(t)=-\int_{0}^{t}dt'\Tr_{B}\left[H_{I}(t),\left[H_{I}(t'),\rho_{I}(t')\otimes\rho_{B}\right]\right].
\label{eq:Born}
\eeq
After substituting the interaction Hamiltonian $H_{I}=g\sum_{\alpha}A_{\alpha}\otimes B_{\alpha}$ and a change of integration variable, we obtain: 
\beq
\dot{\rho}_{I}(t)=\sum_{\alpha,\beta}\int_{0}^{t}dt' G_{\alpha,\beta}(t-t')\,\left[A_{\beta}(t')\rho_{I}(t'),A_{\alpha}(t)\right]+\hc ,\label{eq:pre_redfield}
\eeq
which is the same as Eq.~(9) of \cite{albash_quantum_2012}. 
This
equation is non-local in time because on the right-hand-side the unknown $\rho_{I}(t)$  appears also for times $t'\neq t$. In order to make it time-local
we use the Markov approximation $\rho_{I}(t')\approx\rho_{I}(t)$. When this substitution is made in Eq.~\eqref{eq:Born} the resulting equation is called the Redfield master equation (RME) according to \cite{breuer_theory_2007} [see Eq.~(3.117) therein], though in our case the system Hamiltonian is explicitly time-dependent, in contrast to standard Redfield theory.

We now estimate the error made with this Markov approximation. We use the same techniques utilized in Appendix B of Ref.~\cite{albash_quantum_2012} but note that we do not extend the upper integration limit to $\infty$ as done there. One can then show that the relative error of this approximation is of the order of magnitude of the following integral: 
\beq
\sum_{\alpha,\beta}\int_{0}^{\tau}dt'\,t' \vert G_{\alpha,\beta}(t') \vert.
\label{eq:C3}
\eeq
If $G_{\alpha,\beta}(t)$ is an exponentially decaying function of $t$
with time decay constant $\tau_{B}$ (``fast bath''), the integral in Eq.~\eqref{eq:C3} is of the order of  $O(\tau_{B}^{2}g^{2})$, in agreement with Ref.~\cite{albash_quantum_2012}.

Consider now the case where $G_{\alpha,\beta}(t)$ decays algebraically. I.e., assume that for times $t>t_{0}$, 
\beq
\vert G_{\alpha,\beta}(t) \vert\sim g^{2}\left(\frac{\tau_{M}}{t}\right)^{\theta},
\eeq
where we neglected the (unimportant) dependence on the labels $\alpha,\beta$. For $\theta\neq 2$ the relative error is then of the order of
\begin{equation}
 \int_{t_{0}}^{\tau}dt'\,t' \vert G_{\alpha,\beta}(t') \vert=\frac{g^{2}\tau_{M}^{\theta}}{\theta-2}\left(\frac{1}{t_{0}^{\theta-2}}-\frac{1}{\tau^{\theta-2}}\right).\label{eq:alpha_not2}
\end{equation}
For the case $\theta=2$ we obtain instead
\beq
\int_{t_{0}}^{\tau}dt'\,t' \vert G_{\alpha,\beta}(t') \vert\sim\left(g\tau_{M}\right)^{2}\ln\left(\tau/t_{0}\right).
\eeq
Note that for  $\theta<2$ the relative error  Eq.~\eqref{eq:alpha_not2} increases as $\tau$ grows larger and in fact diverges as $\tau\to\infty$. The same is true for $\theta=2$ although in this case the growth is only logarithmic. Keeping the upper integration limit in Eq.~\eqref{eq:pre_redfield}, finite and bounded by $\tau$,  circumvents this problem in case of an insufficiently fast bath. 

Neglecting this error, we obtain: 
\begin{equation}
\dot{\rho}_{I}(t)=\sum_{\alpha,\beta}\int_{0}^{t}dt' G_{\alpha,\beta}(t')\,\left[A_{\beta}(t-t')\rho_{I}(t),A_{\alpha}(t)\right]+\hc
\label{eq:red_almost}
\end{equation}
This is still the RME, in somewhat more explicit form. After transforming back to the Schr\"{o}dinger picture via $\rho(t)=U_{0}(t,0)\rho_{I}(t)U_{0}(0,t)$,
we directly obtain the SPRME given in Eq.~\eqref{eq:nonadia_red}.

\subsection{ARME}
At this point, since the bath correlation function is peaked in a small time-window 
we can expand $U_{0}(t,t-t')$ in powers of $t'$. This is effectively an expansion in powers of $t'/\tau$, and we can use
\begin{equation}
U_{0}(t,t-t')=e^{-i s H_{S}(t)}+O\left((t'/\tau)^{2}\right).
\label{eq:approx}
\end{equation}
If $G_{\alpha,\beta}(r)$ decays exponentially the error of this latter approximation is then of
order of 
\begin{equation}
\int_{0}^{\infty}dt'\,\left(\frac{t'}{\tau}\right)^{2}\left|G_{\alpha,\beta}(t')\right|=\tau_{B}^{3}\left(\frac{g}{\tau}\right)^{2},
\end{equation}
while the leading term [Eq.~\eqref{eq:red_almost} after substituting Eq.~\eqref{eq:approx}] is $O\left(\tau_{B}g^{2}\right)$. Dividing, the
relative error is 
\beq
\left(\frac{\tau_{B}}{\tau}\right)^{2}\ll1.
\eeq
In this sense this approximation is adiabatic as it requires $\tau$ large, i.e., $\tau\gg\tau_{B}$.

For an algebraic bath, the order of magnitude of the (absolute) error is:
\begin{equation}
\int_{t_{0}}^{\tau}dt'\,\left(\frac{t'}{\tau}\right)^{2}\left|G_{\alpha,\beta}(t')\right|=\frac{g^{2}\tau_{M}^{\vartheta}}{\tau^{2}(\vartheta-3)}\left(t_{0}^{-(\vartheta-3)}-\tau^{-(\vartheta-3)}\right),
\end{equation}
while the order of magnitude of the leading term is:
\begin{equation}
\int_{t_{0}}^{\tau}dt'\,\left|G_{\alpha,\beta}(t')\right|=\frac{g^{2}\tau_{M}^{\vartheta}}{\tau^{2}(\vartheta-1)}\left(t_{0}^{-(\vartheta-1)}-\tau^{-(\vartheta-1)}\right).
\end{equation}
For $\vartheta>3$ the relative error becomes,  assuming $\tau\gg t_{0}$,
$O\left(\left(t_{0}/\tau\right)^{2}\right)$ and is small in the
adiabatic limit (here $\tau\gg t_{0}$). Instead, e.g., for $\vartheta=2$ one
obtains $O\left(\tau/t_{0}\right)$ and so the error is large. 

Finally, discarding the error term in Eq.~\eqref{eq:approx} we obtain the ARME given in Eq.~\eqref{eq:adia_red}. 


\section{Numerical computation of the integral in Eq.~\eqref{eq:nonadia_red}} 
\label{app:X}

We consider the case of a single system-bath operator $A$; generalization
is straightforward. The integral that we need to compute is
\beq
W(t)=\int_{0}^{t}dr\,G(r)U_{0}(t,t-r)AU_{0}(t-r,t).
\label{eq:W}
\eeq

Recall that the standard fourth order Runge-Kutta, which is routinely
used in many ODE solvers, is equivalent to Simpson's rule for integration.
With this in mind we simply implement the integral using Simpson's
rule. This means that the integral in Eq.~\eqref{eq:W} is replaced by the following
sum:
\beq
\int_{0}^{t}drf(r)\approx\Delta r\sum_{j=0}^{n}f(r_{j})w_{j},
\eeq
where $r_{j}=j\Delta r$, for $j=0,1,\ldots,n$, $\Delta r=t/n$,
and $w_{0}=w_{n}=1/3$ while $w_{j}=4/3$ for $j$ odd and $w_{j}=2/3$
for $j$ even. The error in the Simpson's rule is $\propto t(\Delta r)^{4}=t^{5}n^{-4}$.
In order to have a constant error we must pick $n\propto t^{5/4}=t^{1.25}$.
To be conservative, in our simulations we pick $n\propto t^{1.3}$.
Next we need to compute $T_{j}:=U_{0}(t,t-r_{j})$. We use
\begin{align}
T_{0} & =U_{0}(t,t)=\1\notag\\
T_{1} & =U_{0}(t,t-\Delta r)\approx\exp\left(-i\Delta rH_{S}(t)\right)\\
T_{j+1} & :=T_{j}\exp\left(-i\Delta rH_{S}(t-r_{j})\right),\,\,j=0,1,\ldots,n-1. \notag
\end{align}
Note that this approximation preserves unitarity, i.e., $T_{j}T_{j}^{\dagger}=\1$
for $j=0,1,\ldots,n$. 

Finally, the differential equation \eqref{eq:nonadia_red} assumes the form
\begin{align}
\frac{\partial\rho(t)}{\partial t}  & =-i\left[H_{S}(t),\rho(t)\right] +\left(\left[W(t)\rho(t),A\right]+\hc\right).
\end{align}


%

\end{document}